\newlength{\mytopmargin}
\newlength{\myleftmargin}
\theoremstyle{plain}
\newtheorem{theorem}{Theorem}
\newtheorem{prop}[theorem]{Proposition}
\theoremstyle{definition}
\theoremstyle{remark}
\newtheorem{remark}[theorem]{Remark}
\numberwithin{equation}{section}
\begin{document}

\begin{center}
{\bfseries \LARGE Computation of marginal eigenvalue distributions in the Laguerre and Jacobi
$\beta$ ensembles}\\[2\baselineskip]
{\Large Peter J. Forrester${}^1$ and Santosh Kumar${}^2$}\\[.5\baselineskip]

\end{center}

\noindent
${}^1$School of Mathematics and Statistics, The University of Melbourne, Victoria 3010, Australia.\: \: Email: {\tt pjforr@unimelb.edu.au}; 
\\
${}^2${
Department of Physics, Shiv Nadar Institution of Eminence, Gautam Buddha Nagar, Uttar Pradesh - 201314,
India.\: \: Email: {\tt skumar.physics@gmail.com}

\begin{abstract}
We consider the problem of the exact computation of the marginal eigenvalue distributions in the Laguerre and Jacobi $\beta$ ensembles. In the case $\beta=1$ this is a question of long standing in the mathematical statistics literature. A recursive procedure to accomplish this task is given for $\beta$ a positive integer, and the parameter $\lambda_1$ a non- negative integer. This case is special due to a finite basis of elementary functions, with coefficients which are polynomials. In the Laguerre case with $\beta = 1$ and $\lambda_1 + 1/2$ a non-negative integer some evidence is given of their again being a finite basis, now consisting of elementary functions and the error function multiplied by elementary functions. Moreover, from this the corresponding distributions in the fixed trace case permit a finite basis of power functions, as also for $\lambda_1$ a non-negative integer. The fixed trace case in this setting is relevant to quantum information theory and quantum transport problem, allowing particularly the exact determination of Landauer conductance distributions in a previously intractable parameter regime. Our findings also aid in analyzing zeros of the generating function for specific gap probabilities, supporting the validity of an associated large $N$ local central limit theorem.
\end{abstract}

\section{Introduction}
Familiar in random matrix theory are the eigenvalue probability density functions (PDFs) for the Laguerre and Jacobi $\beta$-ensembles
\begin{equation}\label{1.1}
    \mathcal P^{\rm L}(x_1,\dots,x_N) = {1 \over Z_N^{\rm L}}
    \prod_{l=1}^N x_l^{\lambda_1} e^{-\beta x_l/2} \chi_{x_l > 0}
    \prod_{1 \le j < k \le N} |x_k - x_j|^\beta
\end{equation}
and
\begin{equation}\label{1.2}
    \mathcal P^{\rm J}(x_1,\dots,x_N) = {1 \over Z_N^{\rm J}}
    \prod_{l=1}^N x_l^{\lambda_1} (1 - x_l)^{\lambda_2}  \chi_{0< x_l < 1}
    \prod_{1 \le j < k \le N} |x_k - x_j|^\beta.
\end{equation}
Here the notation $\chi_A$ is for the indicator function, with $\chi_A=1$ for $A$ true and $\chi_A=0$ otherwise. For the normalisations $Z_N^{\rm L}, Z_N^{\rm J}$ to be well defined we require $\lambda_1 > -1$ in (\ref{1.1}) and $\lambda_1, \lambda_2 > -1$ in (\ref{1.2}), and further we take $\beta > 0$. One has then~\cite{Fo10},
\begin{align}\label{1.3a}
Z_N^{\rm L}=\left(\frac{\beta}{2}\right)^{-N(\beta(N-1)/2+\lambda_1+1)}\prod_{j=0}^{N-1} \frac{\Gamma(\beta j/2+\lambda_1+1)\Gamma(\beta(j+1)/2+1)}{\Gamma(\beta/2+1)},
\end{align}
\begin{align}\label{1.3b}
Z_N^{\rm J}=\prod_{j=0}^{N-1}\frac{\Gamma(\beta j/2+\lambda_1+1)\Gamma(\beta j/2+\lambda_2+1)\Gamma\left(\beta (j+1)/2+1\right)}{\Gamma\left(\beta(j+N-1)/2+\lambda_1+\lambda_2+2\right)\Gamma(\beta/2+1)}.
\end{align}

Random matrices realising (\ref{1.1}) and (\ref{1.2}) as their eigenvalue PDFs in the full parameter range were first given in 
\cite{DE02} and \cite{ES08} respectively. Alternative recursive constructions were given in \cite{FR02b}. From classical random matrix theory we know that with $\beta = 1,2$ and 4 and particular $\lambda_1, \lambda_2$ there are matrices with Gaussian entries with eigenvalue PDFs (\ref{1.1}) and (\ref{1.2}). Specifically, for $\beta = 1$ or $2$ define the $n_1 \times N$ ($n_1 \ge N$) rectangular random matrix $X_{n_1,N}^{(\beta)}$, with independent standard real ($\beta = 1$) and complex ($\beta = 2$) normal entries. Construct from this the positive definite random matrix $W_{n_1,N}^{(\beta)} = (X_{n_1,N}^{(\beta)})^\dagger X_{n_1,N}^{(\beta)}$, known as a real or complex Wishart matrix. One has that the eigenvalue PDF of $W_{n_1,N}^{(\beta)}$ is given by (\ref{1.1}) with $\lambda_1 = (\beta/2) (n_1 - N + 1) - 1$; see e.g.~\cite[Prop.~3.2.2]{Fo10}. Moreover, using the Wishart matrices to now form $Y_{n_1,n_2,N}^{(\beta)} = (\mathbb I_N + (W_{n_1,N}^{(\beta)})^{-1} W_{n_2,N}^{(\beta)})^{-1}$, one has that the eigenvalue PDF is given by (\ref{1.2}) with $\lambda_1 = (\beta/2) (n_1 - N + 1) - 1$, $\lambda_2 = (\beta/2) (n_2 - N + 1) - 1$; see e.g.~\cite[Prop.~3.6.1]{Fo10}. 
In the real case ($\beta = 1$) the random matrices $W_{n_1,N}^{(\beta)}$ and $Y_{n_1,n_2,N}^{(\beta)}$ are fundamental to classical multivariate statistics, being generalisations of the univariate gamma and $F$ distributions respectively \cite{Mu82}. Complex Wishart matrices (i.e.~random matrices $W_{n_1,N}^{(\beta)}$ for $\beta = 2$) are prominent in the theory of wireless communications in modeling Rayleigh fading of signal~\cite{TV04}. Real Wishart matrices, although comparatively less common in this context, do find applications in modeling ``one-sided Gaussian" fading of communication signals~\cite{KP2010}. Moreover, the fixed (unit) trace variant of the Wishart matrices ($W_{n_1,N}^{(\beta)}/\mathrm{Tr} W_{n_1,N}^{(\beta)})$) serves as a model for random mixed quantum states within the bipartite formalism in quantum information theory~\cite{Fo10,ZS2001}. In addition to these, for each of $\beta=1,2$ and 4 the eigenvalue statistics of $W_{n_1,N}^{(\beta)}$ and $Y_{n_1,n_2,N}^{(\beta)}$ have applications in quantum transport problem \cite{Be97}.

Distinct from the interpretation of (\ref{1.1}) and (\ref{1.2}) as eigenvalue PDFs is their appearance in the theory of quantum many body systems in one-dimension of the Calogero-Sutherland type. Thus in the full parameter range their are many body Schr\"odinger operators, with one and two-body interactions only and the latter involving the inverse distance squared, for which the PDFs $\mathcal P^{\rm L}$ and $\mathcal P^{\rm J}$ are the square of the ground state wavefunction \cite{BF97a}, \cite[Ch.~11]{Fo10}.

In this article we return to a problem first addressed in the early days of the appearance of (\ref{1.1}) and (\ref{1.2}) with $\beta =1$ in multivariate statistics, but extended now to include all positive integer $\beta$. Starting with these joint eigenvalue PDFs, we ask about the marginal PDF of the individual eigenvalues, taken to be suitably ordered. The $\beta =1$ case of this problem for (\ref{1.2}) (after a fractional linear transformation so that the support is $x_l > 0$) was studied as long ago as 1945 by Roy \cite{Ro45}, with further refinements of this particular method, involving reduction formulas of so-called pseudo determinants, forthcoming in \cite{PD69} and \cite{Ek74}. A differential-difference recurrence method (see Section \ref{S2.2} below) was introduced by Davis around 1970 \cite{Da68,Da72a,Da72b}, leading to reduction formulas which in \cite{Ek74} were shown to be equivalent to an implementation of the method initiated by Roy. Moreover in \cite[Appendix A]{Da72b} a listing of the corresponding exact functional forms of the marginal eigenvalue PDFs for (\ref{1.2}) is given up to $N=5$. These involve products of up to $[N/2]+1$ incomplete beta functions. Soon after an explanation for such functional forms, in the context of an overarching Pfaffian structure, were obtained in the work of Krishnaiah and collaborators \cite{KC71,KW71}. These works used the method of integration over alternate variables, as introduced into random matrix theory by Mehta \cite{Me67}. The review \cite{Pi76} gives more details and context relating to this early work.

One notes that the scaling and limiting $x_l \mapsto \beta x_l \lambda_2/2$,
$\lambda_2 \to \infty$ in (\ref{1.2}) gives (\ref{1.1}). This allows results obtained for distributions associated with (\ref{1.2}) to be used to deduce the corresponding distributions for (\ref{1.1}). In \cite[Appendix B]{Da72b} there is a listing of the exact functional forms of the marginal eigenvalue PDFs for the largest eigenvalue of (\ref{1.1}) with $\beta = 1$ up to $N=5$, obtained using this the  differential-difference recurrence method from knowledge of the corresponding functional forms in relation to (\ref{1.2}).  These involve products of up to $[N/2]+1$ incomplete gamma functions. Again, this is consistent with the Pfaffian structure found subsequently in \cite{KC71,KW71}.

Later the calculation of the marginal eigenvalue PDFs, $\{ f_N(n;x) \}_{n=1,\dots,N}$ say, of the individual eigenvalues as specified by (\ref{1.1}) and (\ref{1.2}), received much attention in the case $\beta = 2$ \cite{TW94c,FW01a,FW04,Bo09}. Specifically, these works consider the probabilities $\{ E_N(n;(x,b)) \}_{n=0,\dots,N}$ that there are exactly $n$ eigenvalues in the interval $(x,b)$, with $b=\infty$ (Laguerre case) and $b=1$ (Jacobi case). The generating function
\begin{equation}\label{E1}
\Xi_N(z;x) := \sum_{n=0}^N z^n E_N(n;(x,b))
\end{equation}
for $\beta = 2$ has the special Fredholm determinant structure
\begin{equation}\label{E2}
\Xi_N(1-z;x) = \det ( \mathbb I - z K_{N,x}),
\end{equation}
where $\mathbb I$ is the identity operator, and $K_{N,x}$ is the symmetric finite rank projection operator onto the span of
$\{ (w(x))^{1/2} \phi_p(x) \}_{p=0,\dots,N-1}$, restricted to $(x,b)$; see e.g.~\cite[Ch.~9]{Fo10}. Here $w(x) = x^{\lambda_1} e^{-x}$ (Laguerre case), $w(x) = x^{\lambda_1}(1 - x)^{\lambda_2}$ (Jacobi case), with $\{ \phi_p(x) \}$ the  orthogonal polynomials corresponding to these weights. The works \cite{TW94c,FW01a,FW04} relate the generating function $\Xi_N(x;z)$ for general $\lambda_1, \lambda_2 > -1$ to Painlev\'e transcendents, while \cite{Bo09} shows how to use the Fredholm determinant form directly for the efficient and high precision evaluation of $\{ E_N(n;(x,b)) \}$ and $\{ f_N(n;x) \}$. Although not the topic of the present work, a noteworthy feature of analysis based on (\ref{E2}) is the ease at which the scaled (soft or hard edge as appropriate) $N \to \infty$ limit can be taken; see the references cited above. The finite $N$ determinantal structure was used more directly in the numerical evaluation of $\{ f_N(n;x) \}$ for small $N$ in \cite{ZCW09}, the results of which were moreover put in a wireless communications context.

Essential to our work is
the fact much simpler functional forms associated with the marginal PDF of the individual eigenvalues (\ref{1.1}) and (\ref{1.2}) than evident from the listings in \cite{Da72b},  or the Painlev\'e transcendent expressions of \cite{TW94c,FW01a,FW04}, and that apply for all $\beta$ a positive integer provided $\lambda_1$ is a non-negative integer. Thus denote by $f_N^{(\cdot)}(n;x)$ the PDF of the $n$-th largest eigenvalue as implied by (\ref{1.1}) ($(\cdot) = \rm L$) or (\ref{1.2}) ($(\cdot) = \rm J$). Define polynomials
\begin{equation}\label{qq}
q^{(\cdot)}(x;n,j)  = \sum_{k=0}^{j(\lambda_1+(N-j)\beta)} \alpha_{jk}^{(\cdot)} x^k,  
\end{equation}
for some coefficients $ \alpha_{jk}^{(\cdot)}$.  Then the simplified functional forms are
\begin{align}
f_N^{\rm L}(n;x) & = \sum_{j=n}^N
e^{-\beta j x/2} q^{\rm L}(x;n,j) 
\label{4a}  \\
f_N^{\rm J}(n;x) & = \sum_{j=n}^N
(1 - x)^{j(\lambda_2 + 1) + j (j-1) \beta/2 - 1} q^{\rm J}(x;n,j).
\label{4b}
\end{align}

In the case $\beta =1$ and $n = N$ of (\ref{4a}) this functional form can be found in the work of Edelman \cite{Ed91}, along with a recursive procedure to compute the polynomials $q^{\rm L}(x;N,j)$. In fact for $n=N$ and with $\lambda_1$ a non-negative integer the structures (\ref{4a}) and (\ref{4b}) are valid for general $\beta >0$; in this setting of the Laguerre case see \cite{Fo93c} for the evaluation of $q^{\rm L}(x;N,j)$ in terms of certain generalised hypergeometric functions based on Jack polynomials. For $n=0$ and $\beta = 2$ of $f_N^{\rm L}(n;x)$ these structures were pointed out in \cite{DMJ03}. Recursive procedures to compute $f_N^{\rm L}(n;x)$ and $f_N^{\rm J}(n;x)$ for $n=0$, applicable for $\lambda_1$ a non-negative integer and $\beta$ a positive integer, along with software for their implementation, have been given recently in \cite{FK19,FK23}. The primary goal  in the present paper is to extend the latter two studies to all of the marginal PDFs, not just $n=0$, and so give a recursive computation of the polynomials (\ref{qq}). 

The method we use to compute the polynomials (\ref{qq}) is detailed 
in Section \ref{S2}. It is based on the general $\beta > 0$ extension \cite{Fo93,FR12,FT19,Ku19} of the differential-difference equation first identified by Davis \cite{Da72a}, and a recursive procedure which can be found in a further work of Davis \cite{Da72b}. Crucial to our work, and distinct from that of Davis, is the simplicity of the transcendental functions for which the polynomials are coefficients, as given in (\ref{4a}) and (\ref{4b}), which results from the assumption $\lambda_1 \in \mathbb Z_{\ge 0}$. This allows for the implementation of the scheme using Mathemetica code, which is provided as part of this work.

In Section \ref{S3} we identify special functional forms associated with $\{f_N^{\rm L}(n;x)\}$ in the case that $\lambda_1+1/2$ is a non-negative integer, specialising (for the most part) to $\beta = 1$. There are at least two motivations for this. One is that $\lambda_1+1/2$ a non-negative integer results in (\ref{1.1}) whenever the matrix size of the rectangular standard Gaussian matrix $X_{n_1,N}^{(1)}$ underlying the real Wishart matrix $W_{n_1,N}^{(1)} = (X_{n_1,N}^{(1)})^T X_{n_1,N}^{(1)}$ is such that $n_1 - N$ is even. Another is the relationship between the Wishart and fixed trace Wishart ensembles, which calls for taking the inverse Laplace transformation of the marginal distributions of the former. This step can only be carried out at an analytic level if it possible to expand the marginal distributions of (\ref{1.1}) in a sufficiently simple form. Taking $n=1$ and $N$ even for for definiteness, our main finding (based on a recursive implementation of the differential-difference equation, but not proven in general) is the validity of the expansion 
for the cumulative distribution function (CDF) of the largest eigenvalue (Eq.~(\ref{B2})) below)
$$
F_N^{\rm L}(1;x) =
\sum_{l=1}^{N/2+1} \Big (  e^{-(l-1)x} p_{l,1}^{\rm e}(x) +
 \sqrt{x} {\rm erf}(\sqrt{x/2}) e^{-(l-1/2)x}
p_{l,2}^{\rm e}(x) \Big ),
$$
with
$p_{l,1}^{\rm e}(x), p_{l,2}^{\rm e}(x)$ polynomials. An explicit computation of the inverse Laplace transform is indeed possible, allowing us to give as an application the exact functional form of certain Landauer conductances \cite{KP11a}, which has been implemented as Mathematica~\cite{Mtmk} code.

\section{The recursive scheme}\label{S2}
\subsection{The marginal PDFs and inter-relations}
For a general eigenvalue PDF $\mathcal P(x_1,\dots,x_N)$ symmetric in $\{ 
x_i \}$ and supported on $x_i \in (a,b)$, one notes that the PDF of the $n$-th $(n=1,\dots,N)$ largest eigenvalue $f_N(n;x)$ can be written
\begin{equation}\label{5.0a}
f_N(n;x) = n C_n^N \int_{R_{n}}
\mathcal P(x,x_2,\dots,x_N) \, dx_2 \cdots dx_N,
\end{equation}
where $C_n^N$ denotes the binomial coefficient and $R_{n}$ denotes the domain of integration
\begin{equation}\label{Rn}
b > x_2,\dots, x_n > x > x_{n+1},\dots, x_N > a.
\end{equation}
From the PDF $f_N(n;x)$ of the $n$-th largest eigenvalue, the corresponding cumulative distribution $F_N(n;x)$ is computed according to
\begin{equation}\label{5.1c}
F_N(n;x) := \int_0^x f_N(n;y) \, dy.
\end{equation}

Closely related to $f_N(n;x)$ is the quantity $E_N(n;(x,b))$, defined as the probability that there are exactly $n$ eigenvalues in $(x,b)$. As a multiple integral, one has
\begin{equation}\label{5.1b}
E_N(n;(x,b)) =  C_n^N \int_{\tilde{R}_{n}}
\mathcal P(x_1,\dots,x_N) \, dx_1 \cdots dx_N,
\end{equation}
where $\tilde{R}_{n}$ denotes the domain of integration
\begin{equation}\label{Rn1}
b > x_1,\dots, x_n > x > x_{n+1},\dots, x_N > a.
\end{equation}
Comparison of (\ref{5.0a}) and (\ref{5.1b}) shows that the two are related by
\begin{equation}\label{5.1c1}
f_N(n;x) = {d \over dx} E_N(n-1;(x,b)) + f_N(n-1;x),
\end{equation}
with $f_N(0;x)=0$. In terms of the cumulative distribution function (\ref{5.1c}), this last equation can equivalently be written
\begin{equation}\label{5.1c2}
E_N(n-1;(x,b)) = F_N(n;x) - F_N(n-1;x) \quad (n=1,\dots,N+1),
\end{equation}
with
$$
F_N(0;x) := 0, \qquad F_N(N+1;x) := 1.
$$

Specialise now to the class of PDFs of the form
\begin{equation}\label{5.0b}
{\mathcal P}(x_1,\dots,x_N) = {1 \over Z_N}
\prod_{l=1}^N w(x_l) \prod_{1 \le j < k \le N} | x_k - x_j|^{\beta},
\end{equation}
for some weight function $w(x)$, as is the structure of (\ref{1.1}) and (\ref{1.2}). With $\beta$ a positive integer, we can then write (\ref{5.0a}) in the particular form
\begin{equation}\label{5.1X}
f_N(n;x) = n C_n^N (-1)^{\beta (n-1)} {w(x) \over Z_N} \int_{R_{n}} \prod_{l=2}^N w(x_l) (x - x_l)^\beta \prod_{2 \le j < k \le N} |x_k - x_j|^\beta
 \, dx_2 \cdots dx_N.
\end{equation}
Similarly in relation to the probability (\ref{5.1b}), the specialisation (\ref{5.0b}) gives the particular form
\begin{equation}\label{5.1Y}
E_N(n;(x,b)) =   {C_n^N \over Z_N} \int_{\tilde{R}_{n}} \prod_{l=1}^N w(x_l) \prod_{1 \le j < k \le N} |x_k - x_j|^\beta
 \, dx_1 \cdots dx_N.
\end{equation}
Most important in our ability to recursively compute the polynomials (\ref{qq}) is the functional quantity associated with (\ref{5.1Y})
\begin{equation}\label{5.2a}
E_N(n;(x,b))[f] =   {C_n^N \over Z_N} \int_{\tilde{R}_{n}} f(x_1,\dots,x_N) \prod_{l=1}^N w(x_l)  \prod_{1 \le j < k \le N} |x_k - x_j|^\beta
 \, dx_1 \cdots dx_N.
\end{equation}
Note in particular, by comparing with (\ref{5.1Y}), that we have for $\beta$ a positive integer
\begin{equation}\label{5.2b}
(-1)^{\beta n} (N+1) {Z_N \over Z_{N+1}} w(x) E_N(n;(x,b))\Big [\prod_{l=1}^N (x - x_l)^\beta \Big ] =
f_{N+1}(n+1;x).
\end{equation}

\subsection{Differential-difference recurrences from the theory of Selberg integrals}\label{S2.2}

Related to the LHS of (\ref{5.2b}) in the Laguerre and Jacobi cases are multiple integrals central to the broader theory of the Selberg integral \cite[Ch.~4]{Fo10}.
Let $e_p(y_1,\dots,y_N) $ denote the elementary symmetric polynomials in $\{ y_j \}_{j=1}^N$, and
define in the Jacobi case
\begin{multline}\label{5.1J}
J_{p,N}^{(\alpha)}(x) = {1 \over C_p^N} \int_{\tilde R_n} 
\prod_{l=1}^N t_l^{\lambda_1} (1 - t_l)^{\lambda_2} (x - t_l)^\alpha \\
\times \prod_{1 \le j < k \le N} | t_k - t_j|^\beta 
e_p(x - t_1,\dots, x - t_N) \,
dt_1 \cdots dt_N.
\end{multline}
 This family of multiple integrals satisfies the
differential-difference system \cite{Fo93}, \cite[\S 4.6.4]{Fo10},
later observed to be equivalent to a certain Fuchsian matrix differential equation \cite{FR12},
\begin{multline}\label{5.1aJ}
(N - p) E_p^{\rm J} J_{p+1}(x) 
= (A_p^{\rm J} x + B_p^{\rm J}) J_p(x) - x(x-1) {d \over dx} J_p(x) + D_p^{\rm J} x ( x - 1) J_{p-1}(x),
\end{multline}
where we have abbreviated $J_{p,N}^{(\alpha)}(x) =: J_p(x)$, and
\begin{align*}
A_p^{\rm J} & = (N-p) \Big ( \lambda_1 + \lambda_2 + \beta (N - p - 1) + 2(\alpha + 1) \Big ) \\
B_p^{\rm J} & = (p-N)  \Big ( \lambda_1 + \alpha + 1 + (\beta/2) (N - p - 1) \Big ) \\
D_p^{\rm J} & = p \Big ( (\beta/2)(N-p) + \alpha + 1 \Big ) \\
E_p^{\rm J} & = \lambda_1 + \lambda_2 + 1 + (\beta/2) (2N - p - 2) + (\alpha + 1).
\end{align*}
Note that the coefficients are all independent of $n$ as occurs in the domain of integration of (\ref{5.1J}) and therefore so too is (\ref{5.1aJ}); this parameter only enters via the initial condition $J_0(x)$. We remark too that the case $\beta =1$ of (\ref{5.1aJ}) is equivalent to the differential-difference recurrence obtained in the pioneering work of Davis \cite{Da72a}. Another point of interest is that the family of integrals $\{ J_{p,N}^{(\alpha)}(x) \}$ satisfy a certain matrix difference equation in the parameter $\lambda_1$ \cite{FI10a}.

Similarly, in the Laguerre case define
\begin{equation}\label{2.15}
L_{p,N}^{(\alpha)}(x)  = {1 \over C_p^N}  \int_{\tilde R_n} \prod_{l=1}^N t_l^{\lambda_1} e^{-\beta t_l/2} |x-t_l|^{\alpha}\prod_{1\leq j<k \leq \nu} | t_k -t_j|^{\beta} e_p(x-t_1,\ldots,x-t_N).
\end{equation}
	 Analogous to (\ref{5.1aJ}) this family of integrals satisfies the differential-difference system \cite{FT19}
\begin{equation}	
\label{recur}
(\beta/2) (N-p) L_{p+1,N}^{(\alpha)}(x) = ((\beta/2)(N-p)x+B_p^{\rm L})L_{p,N}^{(\alpha)}(x) 
	+ x{d\over dx}L_{p,N}(x) - D_p^{\rm L} x L_{p-1,N}^{(\alpha)}(x),
\end{equation}
where $p=0,1,...,N-1$ and
	$$
		B_p^{\rm L} = (p-N)[\lambda_1 + \alpha +1+ (\beta/2)(N-p-1)],\qquad
		D_p^{\rm L} = p[(\beta/2)(N-p)+\alpha+1].
	$$
 As noted in \cite{FT19}, (\ref{2.15}) can be deduced as a limiting case of (\ref{5.1aJ}); recall the first sentence of the paragraph above that containing (\ref{qq}).

With knowledge of (\ref{2.15})
and (\ref{5.1aJ}), as already observed and implemented in \cite{FK19}, \cite{FK23}, it is straightforward in theory at least to inductively compute $\{f_N(1;x) \}$ in the Laguerre and Jacobi cases with $\beta$ a positive integer. Thus
(\ref{5.1c2}) in the case $n=1$ reads
\begin{equation}\label{5.1c3}
E_N(0;(x,b)) = F_N(1;x) = \int_0^x f_N(1;x) \, dx.
\end{equation}
But with $N=1$, in the setting of (\ref{5.0b}), $f_1(1;(x,b)) = {1 \over Z_1}w(x)$, so we have knowledge of $E_1(0;(x,b))$. 

For general $N$, in the Laguerre and Jacobi cases and with $\beta$ a positive integer, the differential-difference recurrences can be used to compute $f_{N+1}(1;x)$ from knowledge of $E_{N}(0;(x,b))$. This is because the differential-difference recurrences can compute the functional of $E_{N}(0;(x,b))$ on the LHS of (\ref{5.2b}) from knowledge of $E_{N}(0;(x,b))$, and furthermore the normalisations $Z_N$ are known from (\ref{1.3a}) and (\ref{1.3b}). Consider the Laguerre case for definiteness. Thus one sees that $L_{0,N}^{(0)}(x)$ is by definition proportional to $E_{N}^{\rm L}(0;(x,\infty))$. Iterating the recurrence for $p=0,1,\dots,N-1$ gives $L_{N,N}^{(0)}(x)$, which from the definitions is equal to $L_{0,N}^{(1)}(x)$. Repeating this iterative procedure $\beta$ times gives the sought functional on the LHS of (\ref{5.2b}). 

Having started with $N=1$, we now have knowledge of $f_2(1;x)$. Integrating as required in (\ref{5.1c3}) we now know $E_2(0;(x,b))$. The above procedure can now be repeated to arrive at $f_3(1;x)$, etc. Thus the method involves a double recurrence. 

Taking a practical viewpoint, one sees that the need to compute the integral in (\ref{5.1c3}) at each step of $N$ is a severe limitation in the general case. However, as observed in \cite{FK19}, \cite{FK23}, if one restricts to the special case of $\lambda_1$ a non-negative integer, this limitation can be removed. This is due to the integration formulas
\begin{equation}\label{7.1b}
\int_0^x s^r e^{-\beta m s/2} \, ds = r! \Big ( {2 \over \beta m} \big )^{r+1} \bigg (
1 - e^{-\beta m x/2} \sum_{k=0}^r {1 \over k!} \Big ( {\beta m x \over 2} \Big )^k \bigg ), \quad r \in \mathbb Z_{\ge 0}
\end{equation}
and
\begin{equation}\label{7.1c}
\int_0^x y^r(1-y)^b \, dy =
\tilde{B}(r,b) - \sum_{p=0}^r (-1)^p C_p^r\, {1 \over b + p + 1} (1 - x)^{b+p+1}, \quad r \in \mathbb Z_{\ge 0},
\end{equation}
with
$$
\tilde{B}(r,b) = \int_0^1 y^r(1-y)^b \, dy = {\Gamma(r+1) \Gamma(b+1) \over \Gamma(r+b+2)}.
$$
Alternatively, if $b\in\mathbb{Z}_{\ge 0}$, instead of \eqref{7.1c}, we can use
\begin{align}\label{7.1c-alt}
\int_0^x y^r(1-y)^b \, dy = \sum_{p=0}^b (-1)^p C_p^b\, \frac{1}{r+p+1}x^{r+p+1}.
\end{align}

The computational codes provided with \cite{FK19}, \cite{FK23} for the computation of $\{f_N(1;x) \}$ in the Laguerre and Jacobi cases with $\beta$ a positive integer and $\lambda_2$ a non-negative integer makes essential use of these integration formulas.

It has previously been noted in \cite{FK19} and \cite{FK23} that the theory of the above paragraph implies the expansions (\ref{4a}) and (\ref{4b}) with $n=1$, up to the precise degree of the polynomials (\ref{qq}). The latter was determined in the cited references by a distinct argument; see Remark \ref{R1} below. 

\subsection{Recursive computation of $\{f_N(n;x) \}$ for $n \ge 2$}\label{S2.3}

A theoretical recursive scheme for the computation of $\{f_N(n;x) \}$ for $n \ge 2$ using the differential-difference recurrences of the previous subsection has already been identified by Davis \cite{Da72b}. In short, there is now need for an induction in $n$, based on (\ref{5.1c1}). But as in the case $n=1$ this was not particularly practical in the general case, with again the need to repetitively compute the integral
in (\ref{5.1c3}) in both each step of $N$ and each step in $n$. But we now know how this problem can be overcome by specialising to $\lambda_1$ a non-negative integer, making it timely to revisit this problem.

Let us then consider the induction in $n$ identified in \cite{Da72b} in more detail. As part of the computation of $\{f_N(1;x) \}$ revised above, we have knowledge of the corresponding sequence of cumulative distributions $\{F_N(1;x) \}$. According to (\ref{5.1c2}) with $n=2$, $N=1$
\begin{equation}\label{u91}
E_1(1;(x,b)) = 1 - F_1(1;x).
\end{equation}
But from the definitions $E_1(1;(x,b))$ is proportional to $L_{0,N}^{(0)}(n;N) |_{n=N=1}$ (Laguerre case) and $J_{0,N}^{(0)}(n;N) |_{n=N=1}$ (Jacobi case). The differential-difference recurrences (\ref{2.15})
and (\ref{5.1aJ}) can be used to compute from this $f_2(2;x)$, and then by integrating using (\ref{7.1b}) or (\ref{7.1c}) the cumulative distribution $F_2(2;x)$ follows according to (\ref{5.1c}). Making use of the latter in  (\ref{5.1c2}) with $n=2$, $N=2$, together with knowledge of $F_2(1;x)$ evaluates $E_2(1;(x,b))$. Continually repeating this procedure gives $\{f_N(2;x) \}_{N=2,\dots,N^*}$ and $\{F_N(2;x) \}_{N=2,\dots,N^*}$, where $N^*$ is the desired maximum value of $N$.

We move on now to the $n=3$ case, beginning with the analogue of (\ref{u91})
$$
E_2(2;(x,b)) = 1 - F_2(2;x),
$$
and repeat the same steps as for the $n=2$ case to arrive at the evaluations of $\{f_N(3;x) \}_{N=3,\dots,N^*}$ and $\{F_N(3;x) \}_{N=3,\dots,N^*}$. Iterating in $n$, we can compute in this way $\{ f_N(n;x) \}_{N=n,\dots,N^*}$ for each
$n=2,\dots,N^*$.

A different order of computation is also possible. This is to first compute $\{ f_N(N;x) \}_{N=1,\dots,N^*}$. The point here is that according to (\ref{5.1c2}) we have
$$
E_N(N;(x,b)) = 1 - F_N(N;x) \qquad (N=1,\dots,N^*).
$$
Thus the iterative procedure can be initiated with knowledge of $F_1(1;x)$ only to compute $\{ f_N(N;x) \}_{N=1,\dots,N^*}$. This fact, with $f_N(N;x)$ interpreted as the PDF of the smallest eigenvalue, has been observed previously and made into a practical algorithm \cite{Ed91,Ku19,FT19,FK23}. With $\{ f_N(N;x) \}_{N=1,\dots,N^*}$ and thus upon integration $\{ F_N(N;x) \}_{N=1,\dots,N^*}$ thus known, we use (\ref{5.1c2}) with $n=N=2$ to initiate the computation of $\{ f_N(N-1;x) \}_{N=2,\dots,N^*}$, etc. In this ordering, last to be computed is $f_{N^*}(1;x)$.

The two distinct recurrence orderings are evident from the following diagram, in which the arrows indicate the necessary inputs to arrive at each $f_N(n;x)$; all entries bar those along the first row and the leading diagonal have two incoming arrows.

$$
\begin{array}{ccccccccccc}
f_1(1;x) & \rightarrow & 
f_2(1;x) & \rightarrow &
f_3(1;x) & \rightarrow & f_4(1;x)  &  \rightarrow & \cdots &
\rightarrow & f_{N^*}(1;x) \\
& \searrow & & \searrow & & \searrow &&  \searrow & \cdots & \searrow & \\
&& f_2(2;x) & \rightarrow &
f_3(2;x) & \rightarrow & f_4(2;x)   &  \rightarrow & \cdots &
\rightarrow & f_{N^*}(2;x) \\
&& & \searrow & & \searrow & & \searrow & \cdots & \searrow & \\
&&&& f_3(3;x) & \rightarrow &
f_4(3;x) &  \rightarrow & \cdots &
\rightarrow & f_{N^*}(3;x) \\
&& &&& \searrow & & \searrow &  \cdots & \searrow & \\
&& &&& & \ddots & &   \ddots   &\vdots & \\
&&&&&  &  & & \cdots &
\rightarrow &  f_{N^*}(N^*-1;x) \\
&& &&& &  & &      &\searrow & \\
&&&&&  &  & &  & &
  f_{N^*}(N^*;x)
\end{array}
$$

\begin{remark}\label{R1}
The exact degree of the polynomials is not evident from this procedure. On this point, consider for definiteness the Laguerre case. After a simple change of variables, we have from (\ref{5.1X})
\begin{multline}
f_N(n;x) \propto x^{\lambda_1} e^{-\beta x/2} x^{(N-1) (1 + \lambda_1 + \beta N/2)}
\int_1^\infty dx_2 \cdots \int_1^\infty dx_n \int_0^1 dx_{n+1} \cdots \int_1^\infty dx_N \, \\
\times \prod_{l=2}^N x_l^{\lambda_1}
e^{-\beta N x_l/2} |1 - x_l|^\beta\prod_{2 \le j < k \le N} |x_k - x_j|^\beta.
\end{multline}
For $\lambda_1$ a non-negative integer and $\beta$ a positive integer, and upon appropriate ordering of $\{ x_j \}_{j=2}^n$ and
$\{ x_j \}_{j=n+1}^N$ for $\beta$ odd, one sees that the integrand is equal to $\prod_{l=2}^N x_l^{\lambda_1}
e^{-\beta N x_l/2}$ times a polynomial. Using this fact, following \cite[Appendix A]{FK19}, one sees that the expansion (\ref{4a}) with polynomials of maximum degree as given in (\ref{qq}) results by an evaluation of the multidimensional integral upon repeated integration by parts bases on the exponential factors.
\end{remark}

Based on the recipe outlined in Sections~\ref{S2.2} and~\ref{S2.3}, we provide Mathematica~\cite{Mtmk} codes in the supplementary material capable of computing exact analytical forms for the marginal distributions of all ordered eigenvalues, indexed by $n=1,2,...,N^*$, for a given $N^* (\ge 2)$. For the Laguerre case, the code {\sf "Laguerre-A"} can handle the case of $\lambda_1$ a non-negative integer and $\beta$ a positive integer. It is based on the use of the integral formula given in~\eqref{7.1b}. The case of $\lambda_1+1/2$ a non-negative integer is also considered in other codes, and is described in Section~\ref{S3.2} below. For the Jacobi case, the code {\sf "Jacobi-A"} can be employed for the case $\lambda_1>-1$, $\lambda_2$ a non-negative integer, and $\beta$ a positive integer. It makes use of the integration formula~\eqref{7.1c-alt}. For the case $\lambda_1$ a non-negative integer, $\lambda_2>-1$ , and $\beta$ a positive integer, while the integration formula~\eqref{7.1c} may be used in the recursive procedure, it is convenient in Mathematica to instead implement the swap $\lambda_1\leftrightarrow \lambda_2$, obtain the results using~\eqref{7.1c-alt}, and then consider $x\to1-x$. This still produces the marginal distributions of all eigenvalues, albeit in the reverse order. This strategy is utilized in the code {\sf "Jacobi-B"}.

In Figs.~\ref{FigLagA} and~\ref{FigJacAB}, we present example plots illustrating the marginal distributions of eigenvalues for Laguerre and Jacobi ensembles, respectively. These analytical formula-based curves agree perfectly with matrix-model-based simulation results, which are omitted here for clarity. However, these comparisons are available in the examples provided within the Mathematica files.

\begin{figure*}[!t]
  \centering
\includegraphics[width=0.95\textwidth]{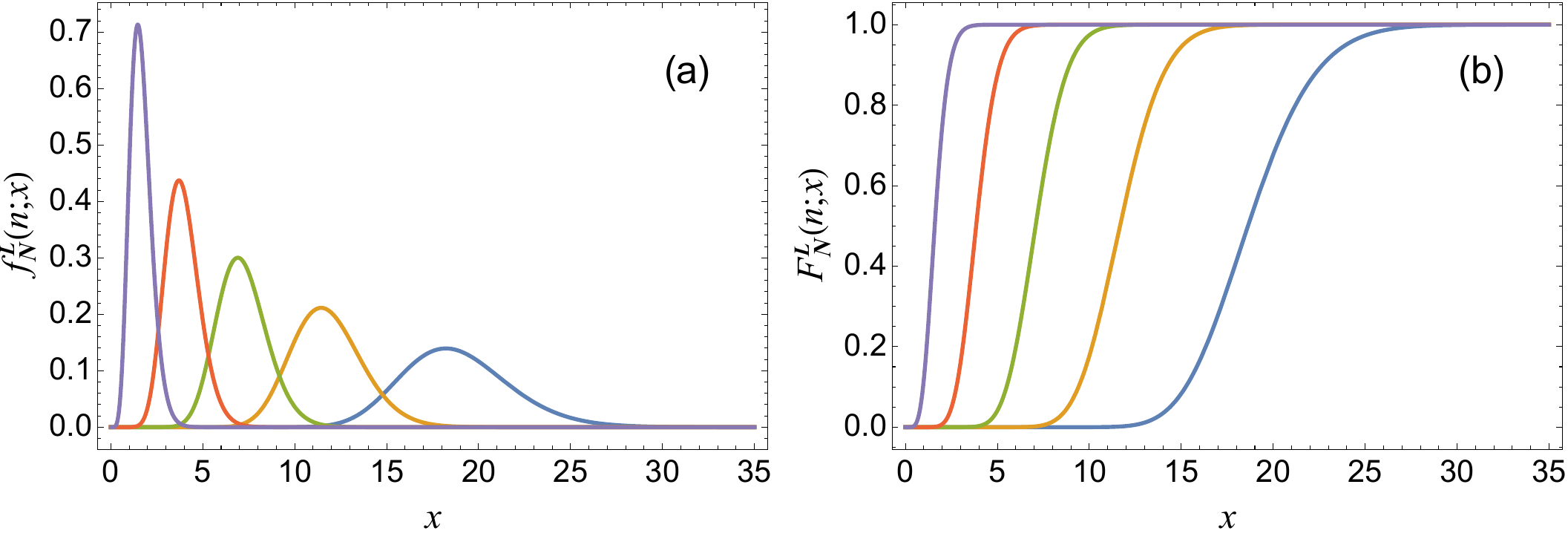}  \caption{Plots of (a) marginal PDFs $f_N^{\rm L}(n;x)$ and (b) marginal CDFs $F_N^{\rm L}(n;x)$ of eigenvalues for Laguerre case with $N=5, \beta=3, \lambda_1=6$. In both panels, the curves from right to left correspond to $n=1$ (largest) to $n=N$ (smallest) eigenvalues.}
\label{FigLagA}
\end{figure*}
\begin{figure*}[!h]
  \centering
\includegraphics[width=0.95\textwidth]{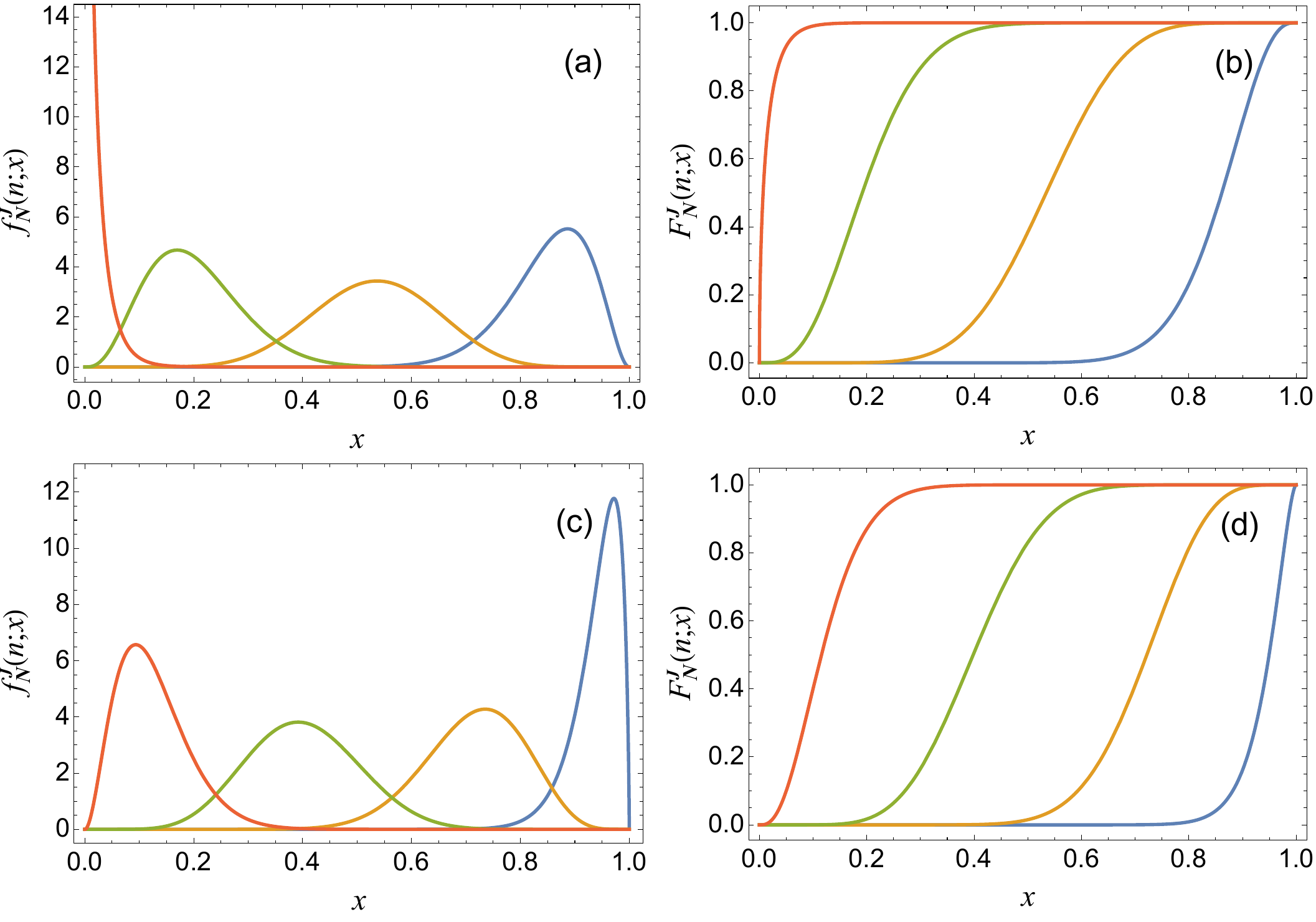}  \caption{Plots of marginal PDFs $f_N^{\rm J}(n;x)$ and marginal CDFs $F_N^{\rm J}(n;x)$ for Jacobi ensemble. Panels (a) and (b) are for parameter choices $N=4,\beta=3,\lambda_1=-1/2,\lambda_2=2$, whereas (c) and (d) are for $N=4,\beta=3, \lambda_1=2, \lambda_2=3/4$. In all plots, the curves from right to left correspond to $n=1$ (largest) to $n=N$ (smallest) eigenvalues.}
\label{FigJacAB}
\end{figure*}

\subsection{Relationship to the eigenvalue density}
The eigenvalue density, $\rho_N(x)$ say, is specified in terms of the eigenvalue PDF according to
\begin{equation}\label{D1}
\rho_N(x) = N \int_a^b dx_2 \cdots \int_a^b dx_N \, \mathcal P(x,x_2,\dots,x_N).
\end{equation}
For the class of PDFs (\ref{5.0b}),
(\ref{D1}) reads
\begin{equation}\label{D2a}
\rho_N(x) = {N \over Z_N} w(x) \int_a^b dx_2 \,  \cdots \int_a^b dx_N \, 
\bigg ( \prod_{l=1}^N w(x_l) |x - x_l|^\beta \bigg )
\prod_{2 \le j < k \le N}|x_k - x_j|^\beta.
\end{equation}
In particular, for $\beta$ even the multiple integral is a polynomial in $x$ of degree $\beta (N - 1)$.
The computation of this polynomial for the Laguerre $\beta$ ensemble using the differential-difference recurrence (\ref{recur}) has been exhibited in \cite{FT19}. Analogous use of (\ref{5.1aJ}) provides a computational scheme of this polynomial in the case of the Jacobi $\beta$ ensemble.

On the other hand, the eigenvalue density relates to the marginal eigenvalue PDFs according to the well known sum rule (see e.g.~\cite[Eq.~(8.18)]{Fo10})
\begin{equation}\label{D2}
\rho_N(x) = \sum_{n=1}^N f_N(n;x),
\end{equation}
as follows upon comparing (\ref{D1}) to (\ref{5.0a}). Comparison of the factors of $w(x)$ in (\ref{D2a}) and (\ref{4a}), (\ref{4b}) tells us that for $\beta$ even
\begin{equation}\label{D2b}
\rho_N(x) = \begin{cases}
e^{-\beta x/2} q^{\rm L}(x;1,1) & {\rm Laguerre}, \\
(1-x)^{\lambda_2} q^{\rm J}(x;1,1) & {\rm Jacobi}. \end{cases}
\end{equation}
We remark that polynomial factors in the density in the cases of $\beta =2$ and $\beta = 4$ can be expressed in terms of the underlying orthogonal polynomials; see \cite[Eqns.~(5.13) and (6.59) respectively, the latter with $x=y$]{Fo10}, thus providing for evaluations of $q^{(\cdot)}(x;1,1)$.

\section{More on exact distributions in the Laguerre case with $\beta =1$}\label{S3}

\subsection{The fixed trace variant}
As pointed out in the introduction, relevant to applications in quantum information --- specifically bipartite entanglement problems \cite{LP88} --- is the unit trace variant of the Laguerre $\beta$-ensemble. For this (\ref{1.1}) takes the modified form
\begin{equation}\label{1.1F}
    \mathcal P^{\rm fL}(x_1,\dots,x_N) = {1 \over Z_N^{\rm fL}}
    \prod_{l=1}^N x_l^{\lambda_1} \delta \Big ( \sum_{l=1}^N x_l - 1 \Big ) \chi_{x_l > 0}
    \prod_{1 \le j < k \le N} |x_k - x_j|^\beta.
\end{equation}
For $\beta = 1,2,4$ and $\lambda_1 = \beta (n_1-N + 1)/2 -1$, this can eigenvalue PDF can be realised as the eigenvalue PDF of
the scaled Wishart matrix $W_{n_1,N}^{(\beta)}/{\rm Tr} \,( W_{n_1,N}^{(\beta)})$ \cite{GN99},
\cite[Exercises 3.3 q.3]{Fo10}.

One observes that taking the Laplace transform of the fixed trace PDF (\ref{1.1F}) gives back, up to normalisation and setting the transform variable $t=\beta/2$, the original PDF (\ref{1.1}). Conversely, upon scaling the eigenvalues of (\ref{1.1}) $x_l \mapsto 2 t/\beta$, the inverse Laplace transform can be taken to arrive at (\ref{1.1F}). Moreover, this prescription applies too at the level of the distribution functions, telling us that
\begin{equation}\label{3.2}
F_N^{\rm fL}(n;x) = {Z_N^{\rm L} \over Z_N^{\rm fL}} {x^{\gamma -1} \over 2 \pi i} \int_{c - i \infty}^{c + i \infty} {F_N^{\rm L}(n;2s/\beta) \over (2s/\beta)^\gamma} e^{s/x} \, ds, \quad c>0,
\end{equation}
where $\gamma := N(\lambda_1 + \beta (N-1)/2 + 1)$.
Significantly, knowledge of the analogue of the expansion (\ref{4a}) for $F_N^{\rm L}(n;x)$ allows the inverse Laplace transform to be computed explicitly, giving that
\cite[Eq.~(9)]{FK19}
\begin{equation}
F_N^{\rm fL}(n;x)  = \Gamma(\gamma)\sum_{j=n}^N
\Theta(1-j x) \sum_{k=0}^{j(\lambda_1+(N-j)\beta)} {\tilde{\alpha}_{jk}^{\rm L} \over \Gamma(\gamma - k )} \Big ( {2x \over \beta} \Big )^k (1 - j x)^{\gamma - k - 1},
\label{4aF}
\end{equation}
where $\Theta(z)$ denotes the Heaviside step function. Here $\tilde{\alpha}_{jk}^{\rm L}$ are the coefficients in the analogues of the polynomials $q^{\rm L}(x;n,j)$ as applies to the analogue of (\ref{4a}) for $F_N^{\rm L}(n;x)$. The explicit form of the expansion (\ref{4aF}) in the case $\beta =2$, $\lambda_1=0$ and with small values of $N$ has been the subject of the recent work \cite{SBL22}. For $n=1$ and $\beta = 2$ see too \cite{AKT19}.

It has been noticed in \cite{Vi11} that $F_N^{\rm fL}(1;x)$, i.e.~the cumulative distribution function of the largest eigenvalue in the Laguerre fixed trace ensemble, permits an alternative interpretation. To see this, one makes use of the definition of this cumulative distribution, together with the functional form (\ref{1.1F}) and a simple change of variables, to obtain
\begin{align}\label{3.3}
F_N^{\rm fL}(1;x) & = \int_0^x dx_1 \cdots \int_0^x dx_N \,
\mathcal P^{\rm fL}(x_1,\dots,x_N)
\nonumber \\
& = {Z_N^{\rm J}|_{\lambda_2=0} \over  Z_N^{\rm fL}} x^{\gamma - 1}
\int_0^1 dx_1 \cdots \int_0^1 dx_N \, \delta \Big ( \sum_{l=1}^N x_l - 1/x \Big )
\mathcal P^{\rm J}(x_1,\dots,x_N)|_{\lambda_2=0}.
\end{align}
The multiple integral in the second line of (\ref{3.3}) has the interpretation as the distribution of the trace (in the variable $1/x$) of the Jacobi $\beta$ ensemble with $\lambda_2 = 0$. In the cases $\beta = 1,2$ and 4, for which the Jacobi $\beta$ ensemble with $\lambda_2=0$ and $\lambda_1 = (\beta/2) (n_1-N + 1/\beta) - 1$ specifies the distribution of the transmission eigenvalues in a two-lead quantum dot \cite{Be97}, this quantity gives the distribution of the Landauer conductance $P_g(g)$ say, which is a physically accessible quantity; see \cite[\S 4]{FK19} and references therein. As previously done in \cite{FK19}, starting with the expansion (\ref{4a}), after having determined the coefficients $\alpha_{jk}^{\rm L}$ in $q^{\rm L}$ of (\ref{qq}) according to the recursive scheme of \S \ref{S2.2}, the equality implied by (\ref{3.3}) allows for the exact evaluation of $P_g(g)$ for $\beta = 1,2$ and $4$ provided the parameter $\lambda_1$ is a non-negative integer. With $\lambda_1 = (\beta/2) (n_1-N + 1) - 1$, this covers all physically interesting cases except $\beta =1$ and $\lambda_1 -1/2$ a non-negative integer.

This latter case has been the subject of some earlier literature. In particular, for small values of $n_1,N$ with $n_1 \ge N$, a tabulation has been given in \cite[\S 3.1]{KP11a}, revealing finite expressions analogous to (\ref{4aF}) again holding true. As some examples, with $n_1=N=3$ and thus $\lambda_1 = -1/2$ we read off from \cite[Eq.~(21)]{KP11a} that
\begin{equation}\label{3.4}
P_g(g) = {6 \over 7} g^{7/2} \chi_{0<g<1} + {3 \over 28} \Big ( 
35 g^3 - 175 g^2 + 273 g - 125 - 8 (g - 2)^{5/2} (g + 5) \Theta(g-2) \Big ) \chi_{1<g<3},
\end{equation}
while \cite[Eq.~(24)]{KP11a} 
gives that for $n=N=4$ and thus again $\lambda_1 = -1/2$ 
\begin{multline}\label{3.5}
P_g(g) = {5 \over 27456} \Big (
429 g^7 - 512 (g - 1)^{9/2}(6g^2 - 64 g + 201) \Theta(g-1) \Big )
 \chi_{0<g<2} \\
- {5 \over 27456} \Big ( 
429 g^7 - 72072 g^5 + 
672672 g^4 - 2800512 g^3 + 6150144 g^2 - 6935552g \\
+ 3158016 - 1024 (g-3)^{11/2} (3 g + 4) \Theta(g-3) \Big ) \chi_{2<g<4}.
\end{multline}
Similar explicit functional forms are given in \cite{KP11a} in the cases $N=3$ and $N=4$ with $\lambda_1 = 1/2$ as well.

The finite functional forms (\ref{3.4}) and (\ref{3.5}), together with the relation between $P_g(g)$ and $f_N^{\rm L}(1;x)$ as revised above, gives motivation for investigating is analogous finite functional forms carry over to $f_N^{\rm L}(1;x)$ with $\beta =1$ and $\lambda_1 + 1/2$ a non-negative integer. Two approaches are possible. One, which is to be carried out in \S \ref{S3.2}, is via the recursive approach of \S \ref{S2.2}. For this the difficulty to be overcome is the requirement to involve the analogue of (\ref{7.1b}) for $r+1/2$ a non-negative integer in successive integrations. In \S \ref{S3.3} a second approach will be considered, which involves making use of the Pfaffian structure associated with $f_N^{\rm L}(1;x)$ in the case $\beta =1$ as first identified by Krishnaiah and collaborators \cite{KC71,KW71}. Important here will be the more recent refinement of the associated formulas due to Chiani \cite{Ch14}.

\subsection{Recursive computation of $F_N^{\rm L}(1;x)$ for $\lambda_1 + 1/2$ a non-negative integer}\label{S3.2}
The $r+1/2=:m$ a non-negative integer analogue of (\ref{7.1b}) is
\cite[Eq.~(14)]{Ch14}
\begin{equation}\label{7.1bX}
{1 \over \Gamma(r+1)}
\int_0^x s^r e^{-s} \, ds = 
{\rm erf}(\sqrt{x}) - e^{-x} \sum_{k=0}^{m-1}{x^{1/2 + k} \over \Gamma(k+3/2)}.
\end{equation}
We know from (\ref{5.1c2}) that $F_N^{\rm L}(1;x) = E_N^{\rm L}(0;(x,\infty))$. 
This probability is given by the integration formula (\ref{5.1Y}). For $\beta$ an even integer,  the product $\prod_{1 \le j < k \le N} |x_k - x_j|^\beta$ can be expanded as a multivariate polynomial and thus the multiple integral factorises into univariate integrals. Requiring too that $\lambda_1 + 1/2$ a non-negative integer,
by use of (\ref{7.1bX}) it then follows that in this setting the cumulative distribution $F_N^{\rm L}(1;x)$ will again have a finite structure analogous to (\ref{4a}), but more complicated as the sum over powers of the exponential function will now be a double sum over powers of the exponential function multiplied by powers of ${\rm erf}(\sqrt{\beta x/2})$.

Less clear is the structure in the case $\lambda_1 + 1/2$ a non-negative integer and $\beta$  odd. The product $\prod_{1 \le j < k \le N} |x_k - x_j|^\beta$ can now  be expanded as a multivariate polynomial only after an ordering of the integration variables. This then gives rise to iterations of the integral on the LHS of (\ref{7.1bX}), the structure of which increases in complexity at each iteration. Nonetheless for small values of $\lambda_1 + 1/2$ a non-negative integer and with $\beta =1$, an entirely computer algebra calculation based on the recursive approach of \S \ref{S2.2} can be carried out. This indicates that for $N$ odd $F_N^{\rm L}(1;x)$ can be expanded as
\begin{equation}\label{B1}
F_N^{\rm L}(1;x) =
\sum_{l=1}^{(N+1)/2} \Big ( \sqrt{x} e^{-(2l-1)x/2} p_{l,1}^{\rm o}(x) +
 {\rm erf}(\sqrt{x/2}) e^{-(l-1)x}
p_{l,2}^{\rm o}(x) \Big ),
\end{equation}
for  polynomials $p_{l,1}^{\rm o}(x), p_{l,2}^{\rm o}(x)$ of degree less that $\gamma$ as appears in (\ref{3.2}); specifically $p_{l,2}^{\rm o}(x)|_{l=1}$ is unity.
For $N$ even, one observes the expansion  
\begin{equation}\label{B2}
F_N^{\rm L}(1;x) =
\sum_{l=1}^{N/2+1} \Big (  e^{-(l-1)x} p_{l,1}^{\rm e}(x) +
 \sqrt{x}\, {\rm erf}(\sqrt{x/2}) e^{-(l-1/2)x}
p_{l,2}^{\rm e}(x) \Big ),
\end{equation}
again for polynomials
$p_{l,1}^{\rm e}(x), p_{l,2}^{\rm e}(x)$ of degree less that $\gamma$;
here $p_{l,1}^{\rm e}(x)|_{l=1}$ is
unity.
One highlights that in both (\ref{B1}) and (\ref{B2}) the function ${\rm erf}(\sqrt{x/2})$ does not appear to any higher powers --- at each recursion step, they cancel out. This enables us to utilize the integration formula~\cite{PBM20},
\begin{align}
\nonumber
&\int z^{c} e^{-b^2 z^2}\erf(a z)\,dz=
-\frac{1}{2b^2}z^{c-1}e^{-b^2z^2} \erf(az)\\
&~~~~~ +\frac{a}{b^2\sqrt{\pi}}\int z^{c-1}e^{-(a^2+b^2)z^2}\,dz+\frac{\lambda-1}{2b^2}\int z^{c-2}e^{-b^2 z^2}\erf(az)\,dz,
\end{align}
after adequate variable and parameters substitution.
We record the explicit form of the polynomials for $N=3$ and $\lambda_1 = -1/2$,
\begin{equation}\label{B1a}
p_{1,1}^{\rm o}(x)=\sqrt{2 \over \pi} (1 - x), \quad
p_{2,1}^{\rm o}(x)= - \sqrt{2 \over \pi}, \quad
p_{1,2}^{\rm o}(x)= 1, \quad
p_{2,2}^{\rm o}(x)= - (1 + x).
\end{equation}
and for $N=4$ and $\lambda_1 = -1/2$,
\begin{align}\label{B2a}
& p_{1,1}^{\rm e}(x)= 1, \quad
p_{2,1}^{\rm e}(x)= -2 - {x \over 2} - {x^2 \over 2}- {x^3 \over 4}, \quad p_{3,1}^{\rm e}(x)= 1 + {x \over 2} \nonumber \\
& p_{1,2}^{\rm e}(x)= \sqrt{\pi \over 2} \Big ( - {3 \over 2} + x - {x^2 \over 2} \Big ), \quad
p_{2,2}^{\rm e}(x)= \sqrt{\pi \over 2} \Big (  {3 \over 2} +  {x \over 2} \Big ), \quad p_{2,2}^{\rm e}(x)= 0.
\end{align}

Assuming the validity of expansions (\ref{B1}) and (\ref{B2}), 
from the viewpoint of carrying out the integration in (\ref{3.2}), the required inverse Laplace transforms are standard apart from the need to compute
\begin{equation}\label{B3}
g(p,q;t) := {1 \over 2 \pi i} \int_{c-i \infty}^{c + i \infty}
{{\rm erf}(\sqrt{s}) \over s^{p+1/2}}
e^{s(t - q) } \, ds, \quad c>0
\end{equation}
for $p$ a non-negative integer and $q \ge 0$. Observing by integration by parts that
\begin{align}\label{B3a}
g(p,q;t) & = {(t-q) \over (p - 1/2)}
g(p-1,q;t) + {1 \over 2 \pi i} {1 \over (p-1/2)\sqrt{\pi}} \int_{c-i \infty}^{c + i \infty}
{e^{s(t - 1 - q) } \over s^{p}}
 \, ds \nonumber \\
& = {(t-q) \over (p - 1/2)}
g(p-1,q;t) + { 1 \over (p-1)!\, (p-1/2) \sqrt{\pi}}
(t - 1 - q)^{p-1} \chi_{t > q+1},
\end{align}
we see that the task can be reduced to computing $g(0,q;t)$, for which it can be verified
\begin{equation}\label{B3b}
g(0,q;t) = {\chi_{0 < t - q < 1} \over \sqrt{\pi} \sqrt{t - q}}.
\end{equation}
Therefore, on solving the recurrence relation in \eqref{B3a}, we have
\begin{align}\label{B3b1}
\nonumber
g(p,q;t)&=\sum_{j=1}^p \frac{2^j(2p-2j-1)!!}{\sqrt{\pi}\,(p-j)!\,(2p-1)!!}(t-q-1)^{p-j}(t-q)^{j-1}\chi_{t>q+1}\\
&~~ +\frac{2^p (t-q)^{p-1/2}}{\sqrt{\pi}\,(2p-1)!!}\chi_{0<t-q<1}.
\end{align}
The above is also expressible in terms of the Gauss hypergeometric function, viz.,
\begin{align}
g(p,q,;t)=\frac{(t-q)^{p-1/2}}{\Gamma(p+1/2)}\chi_{t>q}-\frac{(t-q-1)^p}{\sqrt{\pi}\,p!}\,_2F_1(1/2,1;p+1;q-t+1)\,\chi_{t>q+1}. 
\end{align}

As an example, starting with (\ref{B1}) and (\ref{B1a}) we can use the above theory, in particular \eqref{B3b1}, to evaluate the inverse Laplace transform in (\ref{3.2}) and to thus deduce
\begin{multline}\label{B4}
F_3^{\rm fL}(1;x) \Big |_{\lambda_1 = -1/2} = \chi_{1/3<x<1} \Big ( {1 \over 8} \sqrt{x}(35 - 175x + 273 x^2 - 125 x^3)  \\
- (1 - 2x)^{5/2}(1 + 5x) \Theta(x - 1/2) \Big ) + \chi_{x > 1}.
\end{multline}
We can check from (\ref{3.3}) that this is consistent with (\ref{3.4}).

\begin{figure*}[!ht]
  \centering
\includegraphics[width=0.95\textwidth]{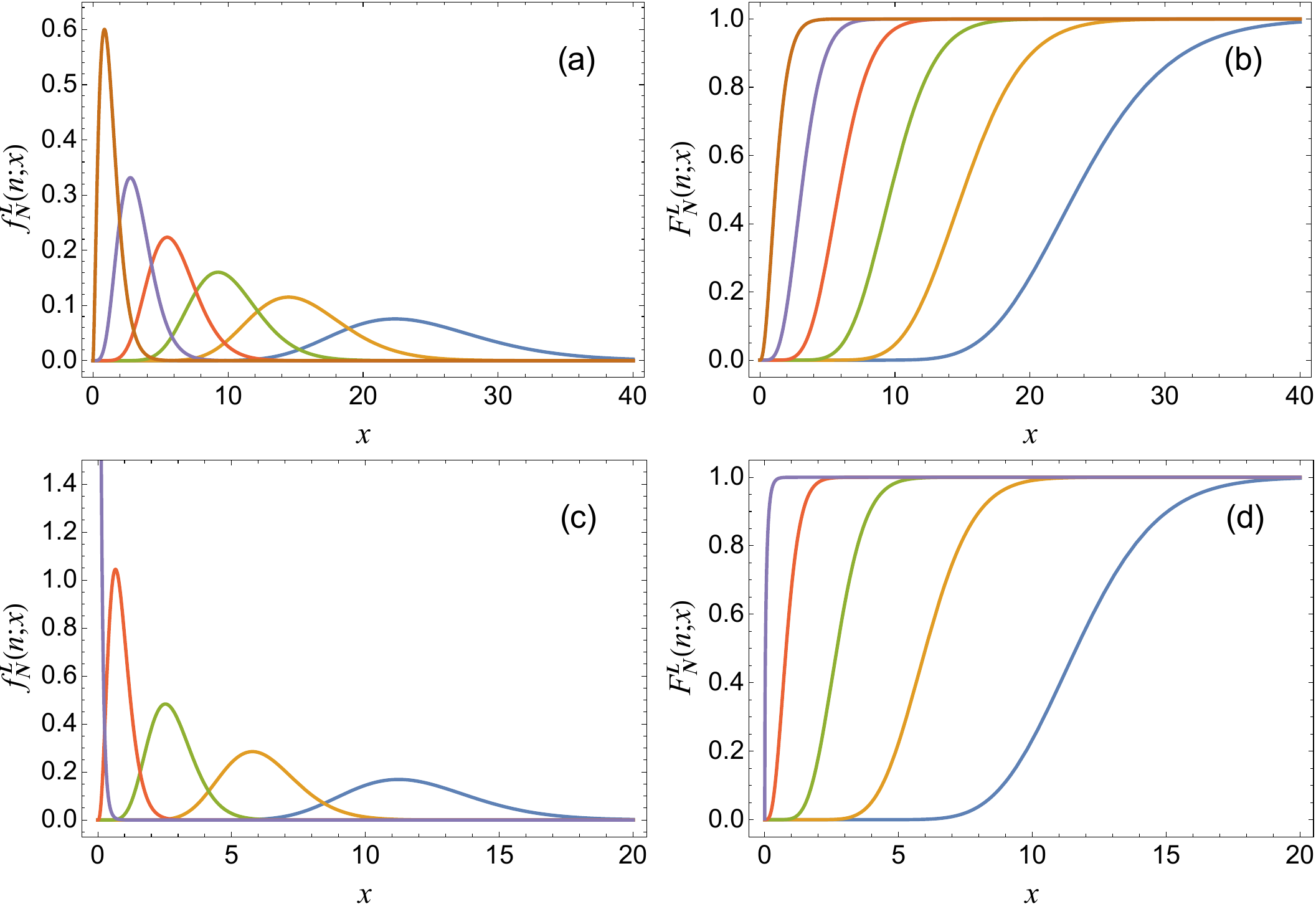}  \caption{Plots of marginal PDFs $f_N^{\rm L}(n;x)$ and marginal CDFs $F_N^{\rm L}(n;x)$ for Laguerre ensemble. Panels (a) and (b) are for parameter choices $N=6,\beta=1,\lambda_1=3/2$, whereas (c) and (d) are for $N=5,\beta=3, \lambda_1=-1/2, \lambda_2=3/4$. In all plots, the curves from right to left correspond to $n=1$ (largest) to $n=N$ (smallest) eigenvalues.}
\label{FigLagBC}
\end{figure*}
\begin{figure*}[!ht]
  \centering
\includegraphics[width=0.95\textwidth]{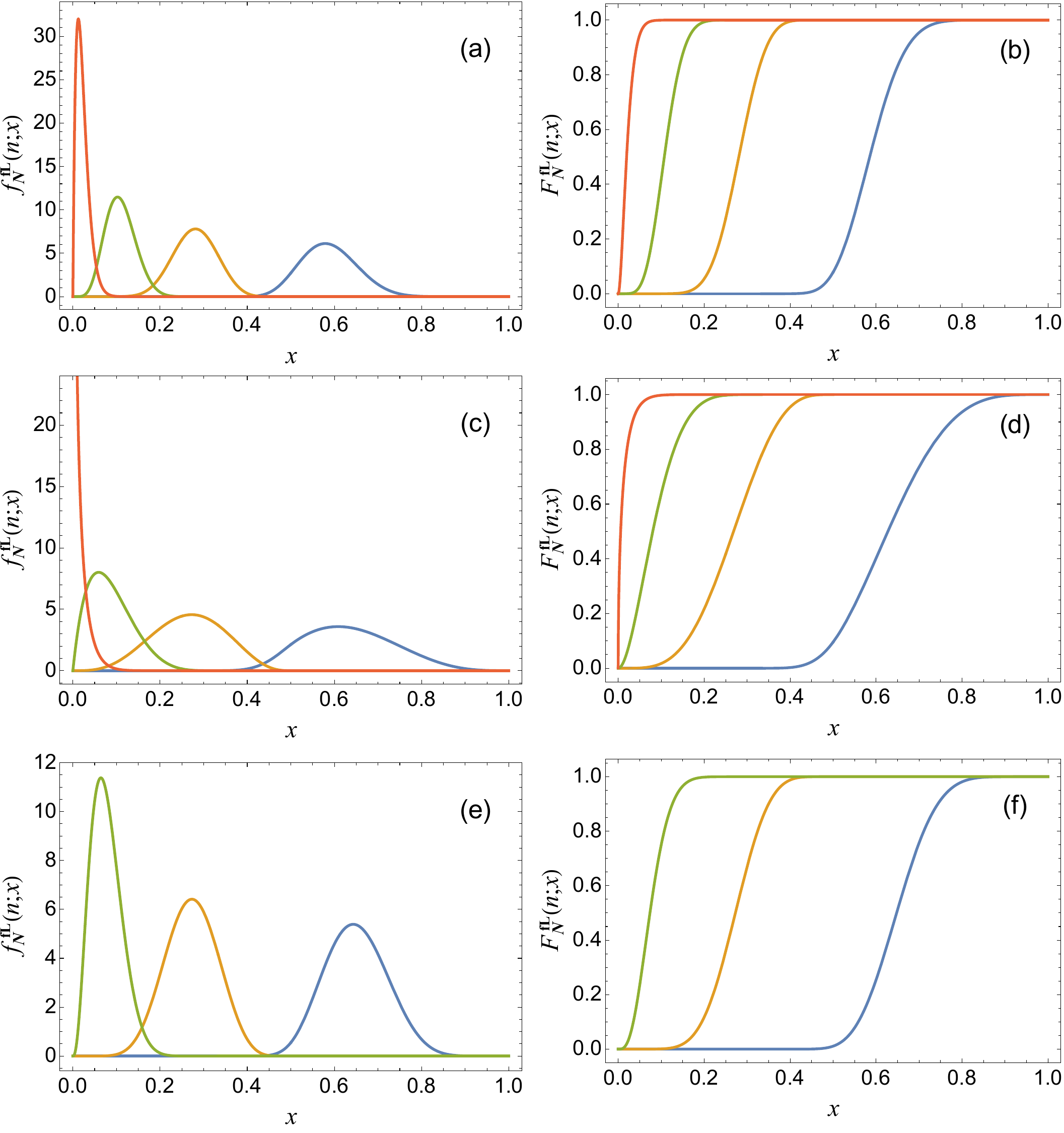}  \caption{Plots of marginal PDFs $f_N^{\rm fL}(n;x)$ and marginal CDFs $F_N^{\rm fL}(n;x)$ for fixed-trace Laguerre ensemble. Panels (a), (b) are for parameter choices $N=4,\beta=3,\lambda_1=1$; (c), (d) are for $N=4,\beta=1, \lambda_1=-1/2$, and (e), (f) are for $N=3, \beta=3, \lambda_1=5/2$. In all plots, the curves from right to left correspond to $n=1$ (largest) to $n=N$ (smallest) eigenvalues.}
\label{FigFTLagABC}
\end{figure*}

Interestingly, computational symbolic algebra reveals that even for $\beta >1$ odd positive integers, the eigenvalue distributions exhibit structures akin to those described by equations (3.8) and (3.9), involving solely the first power of $\erf(\sqrt{\beta x/2})$. Leveraging this insight, we can once again employ the inverse-Laplace transform relations~\eqref{3.2} and~\eqref{B3b1} to derive the distributions for the corresponding fixed-trace variant Laguerre ensemble. Moreover, we can now obtain the exact conductance distribution for scenarios where $|N_1-N_2|$ is an even integer -- a previously intractable case (with $\lambda_1 = (n_1-N+1)/2-1$, $n_1 = \max \{N_1,N_2\}$ and $N = \min \{N_1,N_2\}$; the quantities $N_1$ and $N_2$ are the number of channels in the leads of the quantum dot).

\begin{figure*}[!ht]
  \centering
\includegraphics[width=0.98\textwidth]{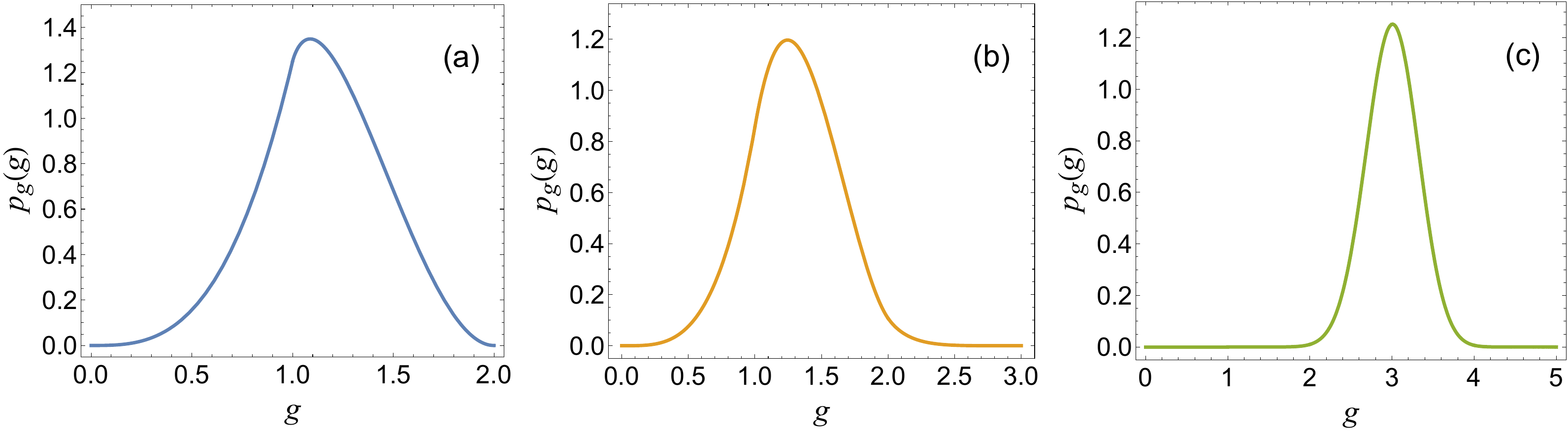}  \caption{Plots of Landauer conductance PDF for number of channels (a) $N_1=2, N_2=4$, (b) $N_1=3, N_2=3$, and (c) $N_1=5, N_2=9$.}
\label{FigCondDist}
\end{figure*}

In the supplementary material, we provide Mathematica codes ``{\sf Laguerre-B}" and ``{\sf Laguerre-C}" which compute the marginal eignvalue distributions for the case $\lambda_1+1/2$ a non-negative integer, when  $\beta=1$ and $\beta$ a positive odd-integer (including 1), respectively. The corresponding codes for fixed trace variants are labelled as ``{\sf Laguerre-FixedTrace-A}" (associated with ``{\sf Laguerre-A}" mentioned before), ``{\sf Laguerre-FixedTrace-B}", and ``{\sf Laguerre-FixedTrace-C}". The Mathematica code for obtaining the above mentioned conductance distribution is contained in the file ``{\sf CondDist}". Figures~\ref{FigLagBC} and \ref{FigFTLagABC} showcase example plots for the above mentioned unconstrained and fixed-trace Laguerre cases, respectively. Moreover, Fig. 5 presents example plots for the PDF of Landauer conductance.

\subsection{Pfaffian form of $F_N^{\rm L}(1;x)$}\label{S3.3}
As noted in the Introduction, the Pfaffian structure of $F_N^{\rm L}(1;x)$ was first identified in the work of Krishnaiah and collaborators \cite{KC71,KW71}. Our starting point is a subsequent refinement of this early work due to Chiani \cite{Ch14}.

\begin{prop}\label{P1}
In the context of the real Wishart matrix $W_{n_1,N}^{(1)}$, set $\alpha=: (n_1-N-1)/2$. 
For $l$ a non-negative integer define
\begin{align}\label{P6a}
P(1/2 + l;x) & := {\rm erf}(\sqrt{x}) - e^{-x} \sum_{k=0}^{l-1} {x^{1/2+k} \over \Gamma(1/2 + k + 1)}, \\\label{P6b}
P(l;x) & := 1 - e^{-x} \sum_{k=0}^{l-1} {x^{k} \over \Gamma( k + 1)}.
\end{align}
Use this in the further definitions
\begin{align}\label{P7a}
& p_l:=P(\alpha+l;x/2), \quad \gamma_l:=\Gamma(\alpha+l), \nonumber
\\
& q_l := 2^{-(2 \alpha + l)} \Gamma(2 \alpha + l) P(2\alpha+l;x), \quad
r_l := e^{-x/2} (x/2)^{\alpha+l} /\Gamma(\alpha+1+l),
\end{align}
and from these quantities set
\begin{equation}\label{P7}
a_{i,i+k} =  - p_i \sum_{l=i}^{i+k-1} r_l +
{2 \over \gamma_i} \sum_{l=i}^{i+k-1} {q_{i+l} \over \gamma_{l+1}}, \quad i=1,\dots,N, \: k=1,\dots,N- i.
\end{equation}
For $N$ odd define too
\begin{equation}\label{P4}
a_{i,N+1}(x) = {2^{-\alpha - N - 1} \over \Gamma(\alpha + N + 1)}p_i, \quad i=1,\dots,N.
\end{equation}
Further set $a_{i,i} = 0$, $a_{j,i} = - a_{i,j}$ for $j \ge i$ so in particular the matrix $[a_{i,j}]$ is anti-symmetric.

Up to a known normalisation (which we omit for brevity --- this is uniquely determined by the requirement that $F_N^{\rm L}(1;x) \to 1$ as $x \to \infty$), one has that for $N$ even
\begin{equation}\label{P5}
F_N^{\rm L}(1;x) \propto 
{\rm Pf}  [ a_{i,j}(x) ]_{i,j=1,\dots,N},
\end{equation}
while for $N$ odd 
\begin{equation}\label{P6}
F_N^{\rm L}(1;x) \propto 
{\rm Pf} \left [
\begin{array}{cc}
[ a_{i,j} ]_{i,j=1,\dots,N} &
[a_{N+1,j}]_{j=1,\dots,N} \\{}
 [a_{i,N+1}]_{i=1,\dots,N} & 0
\end{array} \right ].
\end{equation}
\end{prop}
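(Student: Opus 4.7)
The plan is to derive (\ref{P5})--(\ref{P6}) by applying de Bruijn's integration-over-alternate-variables identity to the multiple-integral representation of $F_N^{\rm L}(1;x)$. Starting from the joint PDF (\ref{1.1}) specialised to $\beta=1$ and $\lambda_1=\alpha=(n_1-N-1)/2$, one has
\[
F_N^{\rm L}(1;x) = \frac{1}{Z_N^{\rm L}} \int_0^x \cdots \int_0^x \prod_{l=1}^N y_l^{\alpha} e^{-y_l/2} \prod_{1 \le j<k \le N}|y_k - y_j|\, dy_1 \cdots dy_N.
\]
After restricting to the ordered sector $0<y_1<\cdots<y_N<x$ (with compensating factor $N!$), the Vandermonde becomes $\det[\phi_j(y_k)]_{j,k=1}^N$ with $\phi_j(y) := y^{\alpha+j-1} e^{-y/2}$. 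For $N$ even de Bruijn's theorem identifies the ordered-sector integral with $\Pf[M]$, where $M$ is the antisymmetric matrix
\[
M_{jk} = \int_0^x\!\!\int_0^x \sgn(y_2 - y_1)\,\phi_j(y_1) \phi_k(y_2)\, dy_1\, dy_2;
\]
for $N$ odd one pads to an $(N+1)\times(N+1)$ antisymmetric matrix with extra row/column entries $\int_0^x \phi_j(y)\,dy$ and zero at the corner, exactly as in the dichotomy (\ref{P5})--(\ref{P6}).

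Explicit evaluation comes next. Using $\int_0^y u^{\alpha+j-1} e^{-u/2}\,du = 2^{\alpha+j}\gamma_j P(\alpha+j;y/2)$, the antisymmetric form $M_{jk} = \int_0^x [\phi_k \Phi_j - \phi_j \Phi_k]\,dy$ reduces after integration by parts to a linear combination of the $p_l$, the ``merged'' incomplete-gamma integrals $\int_0^x y^{2\alpha+l-1}e^{-y}\,dy \propto P(2\alpha+l;x)$ (the $q_l$), and boundary terms proportional to $e^{-x/2}(x/2)^{\alpha+l}$ (the $r_l$) of (\ref{P7a}). The padded entries arising in the $N$-odd case collapse directly to a multiple of $p_j$, matching (\ref{P4}) after absorbing a $j$-dependent constant. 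To reach the telescoped form (\ref{P7}) with cumulative sums $\sum_{l=i}^{i+k-1}$, I apply a congruence $M \mapsto S^{\transpose} M S$ with $S$ upper unitriangular (so $\det S = 1$ and $\Pf$ is preserved), whose effect is to replace row $i$ by row $i$ minus row $i-1$ (and similarly for columns). This converts each $p_l$-difference produced by the evaluation into a single $r_l$-term, and produces precisely the cumulative sums of (\ref{P7}); this is the reorganisation given in the refinement of Chiani~\cite{Ch14} over the earlier presentation in \cite{KC71,KW71}. The overall normalisation is then fixed by $F_N^{\rm L}(1;x) \to 1$ as $x \to \infty$, using $p_l \to 1$, $r_l \to 0$, and $q_l \to 2^{-(2\alpha+l)}\Gamma(2\alpha+l)$.

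The main obstacle is the congruence step: identifying $S$ explicitly and verifying that the resulting entries match (\ref{P7}) term by term, including the ratios $q_{i+l}/(\gamma_i \gamma_{l+1})$ whose precise index structure is not apparent from the raw de Bruijn output. Once that transformation is pinned down, the remaining work reduces to routine manipulation of the incomplete-gamma and error-function primitives (\ref{P6a})--(\ref{P6b}), which are tailor-made for the half-integer/integer dichotomy of $\alpha+j$ forced by $\alpha = (n_1-N-1)/2$.
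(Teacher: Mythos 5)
Your overall architecture --- de Bruijn's integration-over-alternate-variables identity on the ordered sector, followed by explicit evaluation of the resulting antisymmetric bilinear forms in terms of incomplete gamma functions --- is the right one, and indeed it is the route of the literature this proposition rests on: the paper offers no internal proof of Proposition \ref{P1}, quoting it as Chiani's \cite{Ch14} refinement of the Pfaffian structure obtained by Krishnaiah and collaborators \cite{KC71,KW71} via exactly this method. Your identifications $\Phi_j(x)=2^{\alpha+j}\gamma_j\,p_j$ and of the $N$-odd padding column are correct.

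The gap is the one you flag yourself, and the mechanism you propose for closing it is misdirected: no upper unitriangular congruence $S^{\transpose}MS$ is needed, and applying one would in fact change the entries away from (\ref{P7}). The cumulative sums in (\ref{P7}) are nothing but the raw de Bruijn entries, diagonally rescaled (which multiplies the Pfaffian by a constant and is therefore harmless) and then written as a telescoping sum of their first differences in the second index. Concretely, put $\tilde\phi_j(y):=\phi_j(y)/(2^{\alpha+j}\gamma_j)$, so that $\tilde\Phi_j(y)=P(\alpha+j;y/2)$ and the rescaled entry is $M_{ij}=\int_0^x[\tilde\phi_j\tilde\Phi_i-\tilde\phi_i\tilde\Phi_j]\,dy$. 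The elementary recurrence $P(a+1;u)=P(a;u)-e^{-u}u^{a}/\Gamma(a+1)$, which is how the $r_l$ of (\ref{P7a}) enter, gives $\tilde\Phi_{j+1}(y)-\tilde\Phi_j(y)=-e^{-y/2}(y/2)^{\alpha+j}/\Gamma(\alpha+j+1)$; a single integration by parts then yields
\[
M_{i,j+1}-M_{i,j}=-p_i r_j+2\int_0^x \tilde\phi_i(y)\,\frac{e^{-y/2}(y/2)^{\alpha+j}}{\Gamma(\alpha+j+1)}\,dy
=-p_i r_j+\frac{2\,q_{i+j}}{\gamma_i\,\gamma_{j+1}},
\]
the last equality being the evaluation of $\int_0^x y^{2\alpha+i+j-1}e^{-y}\,dy$ against the definitions (\ref{P6a})--(\ref{P6b}). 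Since $M_{i,i}=0$, summing this difference over $j=i,\dots,i+k-1$ reproduces (\ref{P7}) verbatim, and the padding entries $\int_0^x\phi_i$ rescale to (\ref{P4}). With that step supplied, the remainder of your outline (the $N$ even/odd dichotomy of de Bruijn's theorem and the normalisation from $p_l\to1$, $r_l\to0$ as $x\to\infty$) goes through.
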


From the definition of a Pfaffian as a sum of products of its elements (see e.g.~\cite[Definition 6.1.4]{Fo10}) it would appear from Proposition \ref{P1} that the expanded form of $F_N^{\rm L}(1;x)$ is more complicated than given in the conjectures (\ref{B1}) and (\ref{B2}), involving in particular $({\rm erf}(\sqrt{x/2}))^r$ for $r=1,\dots,[(N+1)/2]$. But this does not take into account possible cancellations, which indeed take place as is seen by explicitly evaluating (\ref{P5}) and (\ref{P6}) for small $N$, giving consistency with (\ref{B1}) and (\ref{B2}). How to anticipate this though still remains to be understood.

As noted in  \cite{Ch14}, the Pfaffian forms (\ref{P5}) and (\ref{P6}), with their elements permitting a recursive computation, are well suited to tabulating $F_N^{\rm L}(1;x)$ as a function of $x$ for given $N$ up to moderately large values.

\subsection{Functional form of $\{F_N^{\rm L}(n;x) \}$, $n \ge 2$ }
The functional form of $F_N^{\rm L}(n;x)$ for $n$ even, $N$ odd, $\beta =1$ and $\lambda_1+1/2$ a non-negative integer can be explicitly determined. For this purpose, order the eigenvalues $\lambda_1 > \lambda_2 > \cdots > \lambda_N$. Denote the eigenvalue distribution of the Laguerre ensemble with $\beta =1$ and parameter $\lambda_1$ by ${\rm OE}_N(x^{\lambda_1} e^{-x/2})$ --- thus $x^{\lambda_1} e^{-x/2}$ is the weight function, $N$ is the total number of eigenvalues and ``OE'' denotes orthogonal ensemble, with real Wishart matrices being invariant under transformation by real orthogonal matrices. Denote too the eigenvalue distribution of the Laguerre ensemble with $\beta =4$ and weight $x^\lambda e^{-\lambda_1}$ by ${\rm SE}_N(x^{\lambda_1} e^{-x})$. Here ``SE'' denotes symplectic ensemble, with quaternion Wishart matrices being invariant under transformation by symplectic unitary matrices.

\begin{prop}
We have
\begin{equation}\label{3.23}
F_{2N+1}^{\rm L}(2n;x)\Big |_{\lambda_1 = (a-1)/2 \atop \beta = 1} =
F_N^{\rm L}(n;x)\Big |_{\lambda_{1}= a + 1 \atop \beta = 4}.
\end{equation}
Consequently for $\lambda_1 + 1/2$ a non-negative integer $F_{2N+1}^{\rm L}(2n;x)|_{\beta = 1}$ exhibits the functional form (\ref{4a}).
\end{prop}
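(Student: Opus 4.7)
The plan is to derive (\ref{3.23}) as a corollary of a stronger distributional identity: the joint law of the even-indexed ordered eigenvalues $(y_2, y_4, \ldots, y_{2N})$ in the $\beta = 1$ Laguerre ensemble of $2N+1$ variables with $\lambda_1 = (a-1)/2$ coincides with the joint law of the ordered eigenvalues of the $\beta = 4$ Laguerre ensemble of $N$ variables with $\lambda_1 = a+1$, in the ${\rm SE}_N(x^{a+1} e^{-x})$ scaling convention set up just above the proposition. Since $y_{2n}$ is simultaneously the $2n$-th largest of the $2N+1$ original eigenvalues and the $n$-th largest of the $N$ even-indexed ones, such a distributional identity yields (\ref{3.23}) immediately, marginal by marginal in $n = 1, \ldots, N$.

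To prove the distributional identity, I would start from the ordered joint density
\[
\mathcal{P}^{(1)}(y_1, \ldots, y_{2N+1}) \propto \prod_{l=1}^{2N+1} y_l^{(a-1)/2} e^{-y_l/2} \prod_{1 \le j < k \le 2N+1}(y_j - y_k)
\]
on the chamber $y_1 > y_2 > \cdots > y_{2N+1} > 0$, and integrate out the $N+1$ odd-indexed variables over their ``interior'' intervals $y_{2k-1} \in (y_{2k}, y_{2k-2})$, with the endpoint conventions $y_0 := \infty$ and $y_{2N+2} := 0$. This is the Laguerre analogue of Mehta's classical integration over alternate variables. A small sanity check fixes ideas: for $N=1$, $a=1$ one has
\[
\int_{y_2}^{\infty} \! \int_0^{y_2} e^{-(y_1+y_3)/2}(y_1 - y_2)(y_1 - y_3)(y_2 - y_3) \, dy_3 \, dy_1 = 8 \, y_2^{2} \, e^{-y_2},
\]
by elementary integration by parts, exactly reproducing (up to overall constant) the $\beta=4$, $\lambda_1=2$ weight $y_2^{2} e^{-y_2}$; the a priori possible $e^{-3 y_2 /2}$ contributions from combining tail-integrals over $y_1$ with tail-integrals over $y_3$ cancel identically. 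For general $N$ the same mechanism must operate: each odd-indexed integration contributes polynomial factors in the adjacent even-indexed variables which, combined with the surviving $\beta = 1$ Vandermonde factor, must upgrade each $(y_{2i} - y_{2j})$ to $(y_{2i} - y_{2j})^4$.

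The main obstacle is verifying this Vandermonde-cubing together with the precise parameter shift $(a-1)/2 \mapsto a+1$ on the weight, uniformly in $N$. A natural way to encode the calculation is via a Pfaffian identity of de Bruijn type; indeed the $\beta = 1$ Pfaffian structure underlying $F_N^{\rm L}(1;x)$ already appears in Section \ref{S3.3}, and an appropriate block reduction of such a Pfaffian supplies the required $\beta = 4$ determinantal structure on the retained even-indexed coordinates. Alternatively, one may invoke directly the classical inter-relations between the Laguerre orthogonal and symplectic ensembles found in \cite[Ch.~6]{Fo10}. Finally, for the ``Consequently'' statement: the hypothesis $\lambda_1 + 1/2 = a/2$ a non-negative integer forces $a$ to be a non-negative even integer, so $\lambda_1^{\rm RHS} = a+1$ is a non-negative integer. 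With $\beta = 4$ a positive integer, the recursive scheme of Section \ref{S2} then expresses $f_N^{\rm L}(n;x)|_{\beta=4, \lambda_1 = a+1}$ in the form (\ref{4a}); differentiating (\ref{3.23}) in $x$ transfers this functional form to $f_{2N+1}^{\rm L}(2n;x)|_{\beta=1, \lambda_1=(a-1)/2}$, which is the content of the consequence.
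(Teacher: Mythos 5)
Your proposal is correct and follows essentially the same route as the paper: both deduce (\ref{3.23}) from the joint inter-relation ${\rm even}\,{\rm OE}_{2N+1}(x^{(a-1)/2}e^{-x/2}) = {\rm SE}_N(x^{a+1}e^{-x})$ (the paper simply cites Forrester--Rains \cite{FR01} for this, rather than re-deriving it by integration over alternate variables as you sketch), and then apply the same parameter bookkeeping, $a/2 \in \mathbb{Z}_{\ge 0}$ hence $a+1$ a positive integer, to transfer the functional form (\ref{4a}) from the $\beta=4$ side. One small slip in your sanity check: the displayed double integral evaluates to $8\,y_2^{2}e^{-y_2/2}$, not $8\,y_2^{2}e^{-y_2}$; the missing factor $e^{-y_2/2}$ is the weight of $y_2$ itself, which you omitted from the integrand.
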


\begin{proof}
We know from \cite{FR01} that
$$
{\rm even} \, {\rm OE}_{2N+1}(x^{(a-1)/2} e^{-x/2}) =
{\rm SE}_N(x^{a+1} e^{-x}),
$$
where the notation ``even'' refers to the distribution of all the even labelled eigenvalues. The identity (\ref{3.23}) between marginal eigenvalue distributions is an immediate consequence of this joint relation. Suppose now $\lambda_1  = (a-1)/2$ and
$\lambda_1 + 1/2$ a non-negative integer. Then we have that $a/2$ is a non-negative integer and so $a+1$ is an odd positive integer. But for $a+1$ any non-negative integer,
the functional form of the RHS of (\ref{3.23}) is given by (\ref{4a}), thus implying the stated functional  form of 
$F_{2N+1}^{\rm L}(2n;x) |_{\beta = 1}$.
\end{proof}

\begin{remark}
    Setting $N=n=1$ in (\ref{3.23}) shows
    \begin{equation}
    F_3^{\rm L}(2;x) \Big |_{\lambda_1 = (a-1)/2} = 
    {1 \over \Gamma(a+2)} \int_0^x t^{a+1}
    e^{-t} \, dt.
    \end{equation}
    As noted in \cite{Da72a}, this simple functional form  was first obtained by Eckert \cite{Ek74}.
\end{remark}

Thus for an odd number of total eigenvalues, $F_{2N+1}^{\rm L}(2n;x) |_{\beta = 1}$ in the case that $\lambda_1 + 1/2$ a non-negative integer does not involve the error function, unlike the functional form of $F_{2N+1}^{\rm L}(1;x) |_{\beta = 1}$. On the other hand computation of
the cumulative distribution of the second largest eigenvalue with an even total number of eigenvalues, i.e.~$F_{2N}^{\rm L}(2;x)$, in the case $\lambda_1 = -1/2$ for definiteness using the recurrence strategy of \S \ref{S3.2}, one observes again the functional form (\ref{B2}), but with the upper terminal reduced by one.

A special functional form of $f_{N}^{\rm L}(N;x) |_{\beta = 1}$ --- which is the probability density function of the smallest eigenvalue --- in the case $\lambda_1 + 1/2$ a non-negative integer is known from the work of Edelman \cite[Th.~2.1]{Ed91}. To state this, let $U(a,b;z)$ denote the confluent hypergeometric function of the second kind (Tricomi function). One then has
\begin{equation}\label{3.24}
f_{N}^{\rm L}(N;x) = 
x^{\lambda_1} e^{-N x/2}
\bigg ( P(x) U \Big ( {N -1 \over 2},
- {1 \over 2}; {x \over 2} \Big ) +
Q(x) U \Big ( {N + 1 \over 2},
 {1 \over 2}; {x \over 2} \Big ) \bigg ),
 \end{equation}
 for certain polynomials $P(x), Q(x)$, which for $\lambda_1 = -1/2$ are $Q(x)=0$ and $P(x)$ a positive constant. The latter form initial conditions from which these polynomials can be computed by recurrence for general $\lambda_1 + 1/2$ a non-negative integer. It is shown in \cite{Ed91} that this form permits hypergeometric expressions for the moments of $f_{N}^{\rm L}(N;x)$. However, it  does not yield an explicit evaluation of the corresponding fixed trace probability density, with there being no obvious way to compute the inverse Laplace transform as required by (\ref{3.2}). In fact it is possible to expand (\ref{3.24}) in a form analogous to (\ref{B1}) and (\ref{B2}), which is simpler in that it involves only two distinct exponential functions, one of which has an error function factor. Such an expansion can be deduced from the three term recurrence \cite[Eq.~13.3.7]{DLMF}
 $$
 -a(a-b+1)U(a+1,b;z) = U(a-1,b;z) + (b - 2a - z)U(a,b;z), 
 $$
supplemented by initial conditions which can be generated by computer algebra evaluation, for example
$$
U(-1/2,1/2;z) = \sqrt{z}, \quad 
U(1/2,1/2;z) = \sqrt{\pi} e^z {\rm erfc}(\sqrt{z}).
$$

\begin{figure*}[!ht]
  \centering
\includegraphics[width=0.98\textwidth]{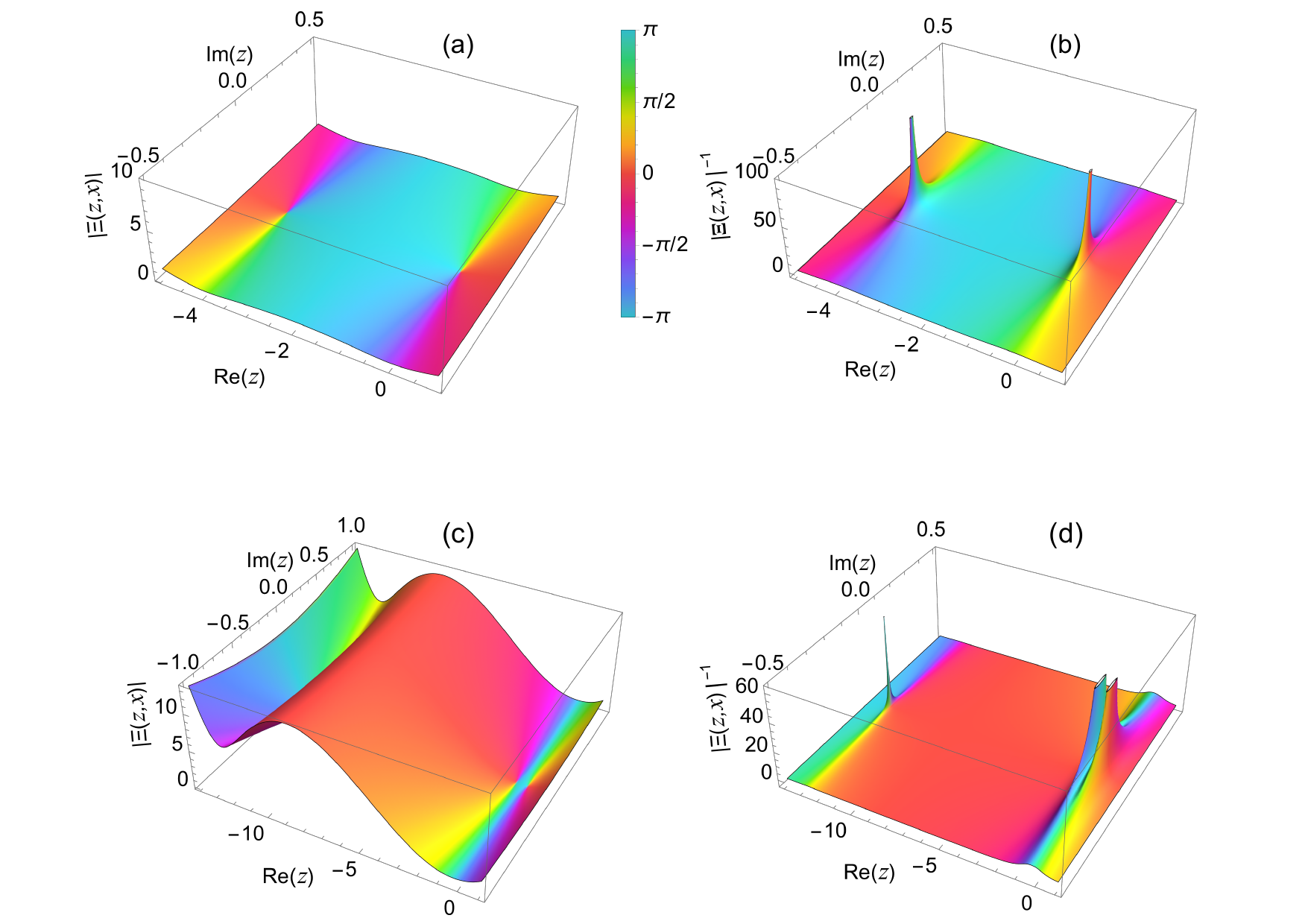}  \caption{Plots of the modulus of generating function $\Xi_N(z;x)$ and that of its inverse $1/\Xi_N(z;x)$ with respect to $z$ in the complex plane. The corresponding arguments are depicted using colors, as indicated in the legend, applicable to all plots. Panels (a), (b) show these quantities for the Jacobi case with $N=2, \beta=4,\lambda_1=1/2$, $\lambda_2=1$, and choice of $x=1/4$. Similarly, panels (c), (d) show the results for the Laguerre case with $N=3,\beta=1,\lambda=1$, $x=4$. In the former case, the zeros of $\Xi_N(z;x)$ and hence the simple poles of $1/\Xi_N(z;x)$ occur at $z\approx-4.026, -5.515\times 10^{-4}$. In the latter case, the zeros (simple poles) of $\Xi_N(z;x)$ (resp. $1/\Xi_N(z;x))$  occur at $z\approx-12.71, -6.237\times 10^{-1},-2.577\times 10^{-2}$.}
\label{FigGFZ}
\end{figure*}

\subsection{Zeros of the generating function $\Xi_N(z;x)$}

A well known consequence of the $\beta =2$ Fredholm determinant expression (\ref{E2}) is that all the zeros of the generating function (\ref{E1}) as a function of $z$ lie on the negative real axis; see e.g.~\cite{HKPV09}. In \cite{FL14} this analytic property was used to establish that for $N \to \infty$ and with certain so-called soft edge scaling of $x$, the probabilities in $\{E_N(n;(x,b)) \}$, satisfy a local central limit theorem. The essential mechanism is that the zeros being on the negative real axis imply that the probabilities are log concave,
$$
\log E_N(n+1;(x,b)) -2 \log E_N(n;(x,b)) + \log E_N(n-1;(x,b)) \le 0,
$$
which in turn is a sufficient condition for the passage from a central to a local limit theorem \cite{Be73}.

It is well known that for $\beta =1$ and $\beta =4$, due to an underlying Pfaffian point process structure, an analogue of (\ref{E2}) holds true,
\begin{equation}
\Big ( \Xi(1-z;x) \Big )^2 =
\det (\mathbb I - z J^{-1} A ), \qquad
J = \begin{bmatrix} 0 & 1 \\ -1 & 0 \end{bmatrix},
\end{equation}
where $A$ is a real $2 \times 2$ antisymmetric integral operator (this equation is equivalent to \cite[Eq.~(6.27)]{Fo10}). However, the matrix integral operator $J^{-1} A$ is not self adjoint, so we cannot conclude from this the  location of the zeros from an eigenvalue factorisation. We remark that in the work \cite{Ka14} a Pfaffian point process for which the eigenvalues of $J^{-1} A$ are all real was said to have a kernel with a quasi-real diagonal form, and moreover, this property was used to guarantee the existence of the Pfaffian point process as specified by the integral operator $A$. However, in this same work no progress was made on establishing the quasi-real diagonal form for integral operators $A$ arising in random matrix theory.

We can made use of our codes to compute the zeros of $ \Xi(z;x)$ from cases that we have generated the sequence of cumulative distributions $\{ F_N(n;x) \}$. Thus we have from (\ref{5.1c2}) and (\ref{E1}) that
\begin{equation}\label{E4}
\Xi_N(z;x) := (1-z) \sum_{n=1}^N z^{n-1} F_N(n;x) + z^N.
\end{equation}
To proceed further, we fix $x$ and numerically solve the polynomial equation in $z$, {viz.}, $\Xi_N(z;x) = 0$. Due to possible severe ill-conditioning (depending on $x$ the roots may be huge in magnitude, even though $N$ is small, or they may be tiny in magnitude --- examples are given in the Mathematica file ``{\sf GeneratingFunctionZeros}" contained in the supplementary material)
the numerical values of $\{F_N(n;x) \}$ must be given to high precision. With this done, we have found that in all cases we have investigated (Laguerre, Jacobi with $\lambda_1$ a non-negative integer up to 10 and $\beta$ a positive integer up to 8 and $N$ taking values up to 12, as well as the Laguerre case with $\lambda_1+1/2$ a non-negative integer 0 or 1 and $\beta =1$, and its fixed trace version)
that all the zeros of $\Xi_N(z;x)$ are on the negative real axis. 
There is a qualification to this statement in the fixed trace Laguerre case. For $x < 1/N$, due to the constraint $\sum _{l=1}^N x_l = 1$, the gap probability $E_N (0; (x, 1))$ must vanish, so we see from \eqref{E1} that $z = 0$ is also a zero. As soon as $x>1/N$, $E_N (0; (x, 1))$ becomes non-vanishing, however now $E_N (N; (x, 1))$ vanishes thereby reducing the degree of the polynomial $\Xi_N(z;x)$ in $z$ by 1. Similarly, as $x$ gets closer to $1$, gradually all $E_N (n; (x, 1))$ except $n=0$ and $n=1$ vanish and we are left with only one zero of $\Xi_N(z;x)$, given by $-E_N (0; (x, 1))/E_N (1; (x, 1))$. We also note that for $x\to 1$, we have $E_N (0; (x, 1))\to 1$ and $E_N (1; (x, 1))\to 0$, so that this zero tends to $-\infty$.
In Fig.~\ref{FigGFZ}, we show two examples where the zeros of $\Xi_N(z;x)$ (equivalently, simple poles of $1/\Xi_N(z;x)$) can be visualized.

\subsection*{Acknowledgements}
The work of PJF was supported by the Australian Research Council  grant DP210102887. SK acknowledges the support provided by SERB, DST, Government of India, via Grant No. CRG/2022/001751.

\providecommand{\bysame}{\leavevmode\hbox to3em{\hrulefill}\thinspace}
\providecommand{\MR}{\relax\ifhmode\unskip\space\fi MR }
\providecommand{\MRhref}[2]{%
  \href{http://www.ams.org/mathscinet-getitem?mr=#1}{#2}
}
\providecommand{\href}[2]{#2}

\end{document}